\pgfplotsset{compat=1.14}
\newcommand*{\defeq}{\stackrel{\text{def}}{=}}
\newcommand{\Agents}{\mathcal{U}}
\newcommand{\Time}{\mathbb{T}}
\newcommand{\Actions}{\Sigma} 
\newcommand{\Elabel}{\mathcal{Y}}
\newcommand{\Vlabel}{\mathcal{X}}
\newcommand{\Bool}{\mathbb{B}}
\newcommand{\Safety}{\varphi}
\newcommand{\Lang}{\mathcal{L}}
\newcommand{\Replace}{\mathcal{R}}
\newcommand\dhxrightarrow[2][]{%
  \mathrel{\ooalign{$\xrightarrow[#1\mkern4mu]{#2\mkern4mu}$\cr%
  \hidewidth$\rightarrow\mkern4mu$}}
}
\newcommand{\Dj}[1]{\textcolor{red}{#1}}
\newcommand\Square[1]{+(-#1,-#1) rectangle +(#1,#1)}
\newtheorem{corollary}{Corollary}
\newtheorem{lemma}{Lemma}
\newtheorem{prop}{Proposition}
\newtheorem{theorem}{Theorem}
\newtheorem{example}{Example}
\newtheorem{prob}{Problem}
\newtheorem{assume}{Assumption}
\begin{document}
\title{Online Synthesis for Runtime Enforcement of Safety in Multi-Agent Systems}
\author{Dhananjay Raju,  Suda Bharadwaj, Ufuk Topcu, and Franck Djeumou 
%\thanks{This paragraph of the first footnote will contain the date on 
%which you submitted your paper for review. It will also contain support 
%information, including sponsor and financial support acknowledgment. For 
%example, ``This work was supported in part by the U.S. Department of 
%Commerce under Grant BS123456.'' }
\thanks{The authors are with The University of Texas at Austin, Austin, TX 78712 USA (e-mail: draju, suda.b, utopcu, and fdjeumou@utexas.edu). }} 
%example, F. A. Author is with the National Institute of Standards and 
%Technology, Boulder, CO 80305 USA (e-mail: author@boulder.nist.gov). }
%\thanks{S. B. Author, Jr., was with Rice University, Houston, TX 77005 USA. He is 
%now with the Department of Physics, Colorado State University, Fort %Collins, 
%CO 80523 USA (e-mail: author@lamar.colostate.edu).}
%\thanks{T. C. Author is with 
%the Electrical Engineering Department, University of Colorado, Boulder, CO 
%80309 USA, on leave from the National Research Institute for Metals, 
%Tsukuba, Japan (e-mail: author@nrim.go.jp).}}

\maketitle

\begin{abstract}
We study the problem of enforcing safety in multi-agent systems at runtime by modifying the system behavior if a potential safety violation is detected. Traditional runtime enforcement methods that solve a reactive synthesis problem at design time have two significant drawbacks.  Firstly, these techniques do not scale as one has to take into account all possible behaviors from every agent, and this is computationally prohibitive. Second, these approaches require every agent to know the state of every other agent.  We address these limitations through a new approach where online modifications to behavior are synthesized onboard every agent. There is an \emph{enforcer} onboard every agent, which can modify the behavior of only the corresponding agent. In this approach, which is naturally decentralized, the enforcer on every agent has two components: a pathfinder that corrects the behavior of the agent and an ordering mechanism that dynamically modifies the priority of the agent. The current priority of an agent determines if the enforcer uses the pathfinder to modify the behavior of the agent. 
We derive an upper bound on the maximum deviation for any agent from its original behavior, that is all agents make progress. We prove that the worst-case synthesis time is quadratic in the number of agents at runtime as opposed to exponential at design-time for the existing methods that rely on design-time computation merely. Additionally, we prove the completeness of the technique under some mild assumptions; that is, if the agents can progress safely, then enforcers will find this behavior.
We test the technique in collision avoidance scenarios. For 50 agents in a 50$\times$50 grid modeling the common workspace for the agents, the online synthesis requires only a few seconds per agent whenever a potential collision is detected. In contrast, the centralized design time synthesis of shields for a similar setting is intractable beyond four agents in a 5$\times$5 grid.
\end{abstract}

\begin{IEEEkeywords}
Multi-Agent Systems, Runtime Enforcement, Synthesis, Safety.
\end{IEEEkeywords}

\section{Introduction}

%\Dj{Problem setting - multi-agent safety with limited communication}

Ensuring the safety of multi-agent systems is a crucial and challenging problem. We study this problem in a setting in which (i) the agents do not know the state of the other agents, (ii) the agents can only communicate if they are in a communication group, which depends on spatial proximity, and (iii) each agent can share only a limited amount of information with the other agents. 

Runtime enforcement is one approach for ensuring safety for multi-agent systems \cite{FalconeFM12}. Enforcers typically monitor the behavior of the system and modify the behavior, if they detect a potential unsafety. Shielding is an approach to runtime enforcement \cite{BloemKKW15,KonighoferABHKT17}. A shield is typically assumed to be aware of and be able to affect all the agents in the system instantaneously \cite{multiagentshield}. Thus, shields require global information about the state of the system. However,
global information on the state of all the agents is often difficult to obtain in multi-agent systems. There has been some work in relaxing these assumptions using \emph{localized shields} that have awareness and authority over only the agents in their local region. However, no genuinely \emph{decentralized} approach, in which a shield onboard each agent can modify only the corresponding agent's behavior, exists ~\cite{BharNFM}. In such an approach, there would be no entity that has global information on the state space of the entire system. 

Without global information of the state, guaranteeing safety is, in general, undecidable~\cite{Schewe08}. Thus, we focus solely on enforcing \emph{local} safety properties, which is a subset of general temporal safety properties. A safety property is local if it can be enforced in the entire multi-agent system by enforcing it within each communication group. 
Essentially, shields are partial functions from the current states of the agents to the next states. 
Existing methods find this partial function by solving a reactive synthesis problem at design time~\cite{BloemKKW15,BharNFM,multiagentshield}. However, it is computationally prohibitive in the case of multi-agent systems since the resulting safety game has to take into account all possible behaviors from every agent \cite{BharNFM}. 

We formulate the synthesis of modified safe behavior of an agent as a graph search problem. More specifically, we assume agents know the intended behaviors of the other agents in its communication group and hence, an onboard enforcer can modify an agent's behavior, taking into account the behavior of the other agents in the same group. If the system continues to remain unsafe after the agent has changed its behavior, then the other agents are forced to change their behaviors. Thus, the synthesis of safe behavior for all agents in a communication group can be framed as a sequence of graph searches. This technique is similar to hierarchical path planning~\cite{Silver2005Jun}. In particular, we synthesize safe behavior online when required, i.e., when the intended trajectories of the agents violates a safety requirement. However, such an online approach to synthesizing new behaviors may create scenarios where some agents may never progress. That is, the behaviors of some agents may be perpetually modified to ensure safety.

%\Dj{It is a simple observation that from any combination of current states of the agents, it is possible to generate safe behavior for every agent (if one exists) using a sequence of graph searches. More specifically, since the agent in question knows the behaviors of the other agents in its communication group, the agent can modify its behavior, taking into account the behavior of the other agents in the same group.  Thus, the synthesis of safe behavior for an agent can be framed as a simple graph search. If the system continues to remain unsafe after the agent has changed its behavior, then the other agents are forced to change their behaviors.  This technique is similar to hierarchical path planning \cite{Silver2005Jun}. We consider the natural question: ``Is it possible to construct enforcers that synthesize safe-behavior by solving a sequence of simple graph searches in a decentralized fashion when required?'' However, such an online approach to synthesizing new behaviors may create scenarios where some agents may never progress. That is, the behaviors of some agents may be perpetually modified to ensure safety.}

In this paper, we present a novel \emph{decentralized} framework for \emph{online} synthesis for runtime enforcement. The enforcer onboard each agent issues modifications to the behavior of its corresponding agent in an order according to their \emph{priority} using graph search.
The framework uses a novel decentralized \emph{ordering mechanism} to dynamically maintain the agent's priorities to ensure that every agent can make progress according to their intended behaviors. We assume that the agents have agreed on this mechanism. Additionally, it is possible to compute the order between any two agents (the total order relation corresponding to the priorities) on the fly using only the flags that are local to the two agents. Moreover, only the corresponding agents can modify these flags. The presented ordering mechanism provably guarantees that every agent can acquire the highest priority in a finite length of time; hence live-locks are avoided. The online synthesis approach performs local behavior modification as needed, this circumvents the state-space explosion.

%\subsubsection*{Constrast with existing multi-agent path planning algorithms.} 
The proposed approach is similar to \emph{cooperative path planning} in multi-agent systems, which is a PSPACE-hard problem~\cite{Hopcroft1984Dec}. Hierarchical cooperative A* (HCA*) is a decentralized approach that uses fixed priorities on agents and makes a plan for an agent while respecting the plans of the agents with higher priorities~\cite{Silver2005Jun}. However, HCA* may require the agents to change their plan continuously and, therefore, cannot guarantee finite-time progress \cite{Silver2005Jun}. Proposed approaches that achieve completeness and produce optimal paths~\cite{Standley2010Jul,Standley2011} are either non-tractable or rely on global information. The method in \cite{Zhang2016} relaxes the reliance on global information; however, it still falls back to using it as a last resort. 

In contrast, the proposed framework does not need global information,  ensures bounded progress, and can be implemented in a decentralized manner. This level of decentralization, while ensuring system-level safety, is possible because the decentralized priority exchange mechanism we present ensures the absence of live-locks. In the existing techniques, live-locks have to be detected which requires global information. Additionally, if the agents in the system can idle, we provide a condition that guarantees \emph{completeness}. That is, if there exists a safe behavior then the enforcer can guarantee safety.

\subsubsection*{Contributions.} To our best knowledge, this paper presents the first approach where the enforcement of safety properties is viewed through the lens of cooperative path planning. The existing formulation for runtime enforcement through shielding uses reactive systems. However, this is unsuitable and cumbersome for the online approach.  Therefore, we provide a new formulation where the enforcers are tuples of partial functions. An extra benefit of such an approach is that the \emph{joint} behavior of all the agents can be directly expressed as a functional composition. Lastly, we prove the resulting enforcers also satisfy the following properties:
\begin{enumerate}
    \item \emph{Correctness}: The modified system behavior satisfies all the safety properties,
    \item \emph{Minimal Deviation}: The enforcer must modify behavior only if necessary and
    \item \emph{Bounded}: The deviation from the original behavior must be finite. We additionally show that the maximum deviation is linear in the number of agents.
    \item \emph{Completeness}: If a centralized stabilizing shield \cite{bloem2014sat} can guarantees correctness, the enforcers will also guarantee correctness. 
\end{enumerate}
By construction, the enforcers do not require global information. Additionally, we prove that the worst-case synthesis time for each agent is at most quadratic in the number of agents.

%\Dj{We compare decentralized runtime synthesis of shields to the centralized design-time synthesis of shields in~\cite{multiagentshield}. Shields are synthesized at runtime in a system with four agents and 25 states in under a second per agent compared to \cite{multiagentshield} which requires on the order of $10^4$ seconds for shield synthesis at design time.}
\section{Preliminaries}

$\Bool = \{\top,\bot\}$ is the domain of Booleans. \noindent A finite (infinite) word over a set $\Sigma$ of elements is a finite (infinite) sequence $w=a_1a_2 \dots a_n$ of elements of $\Sigma$. The length of $w$ is $|w|$. $\epsilon_\Sigma$ denotes the empty word over $\Sigma$ or $\epsilon$ when the context is clear. 
The concatenation of two words $w$ and $w'$ is denoted $w \cdot w'$. A word $w'$ is a prefix of a word $w$, denoted $w' \leq w$, whenever there exists a word $w''$ such that $w=w' \cdot w''$, and $w' < w$ if additionally $w'\neq w$. $w$ is said to be an extension of $w'$.  
The sets of all words and all non-empty words are denoted by $\Sigma^*$ and $\Sigma^+$, respectively. $\Sigma^{\leq k}$ denotes all words of length at most $k$. 
A language or a property over $\Sigma$ is any subset $L$ of $\Sigma^*$. 
%$L$ is \emph{prefix-closed} if all prefixes of all words from $L$ are also in $L: L=\{w~|~\exists w'\in L :w \leq w'\}$. 
%Similarly, a language $L$ over $\Sigma$ is \emph{extension-closed} if all extensions of all words from $L$ are in $L: L= \{w~|~\exists w' \in L: w'\leq w\}$. $w = w_0w_1\ldots$ be a possibly infinite word.  By $w[i:\ell]$, where $i,\ell \in \mathbb{N}$, we denote the sub-string $w_iw_{i+1}\ldots w_{i+\ell}$.

Let $G=(V, E)$ be a directed graph where $V$ is a finite set of nodes, and $E$ is a finite set of edges. The \emph{distance} $d(u,v)$ between two vertices $u$ and $v$ is defined as the length of a shortest directed path from $u$ to $v$. 
Let $\Agents$ denote a set of node labels and $\Actions$ denote a set of edge labels. $\mathbb{T}=\{1, 2, \dots \infty \}$ is a discrete set of time indices.
A graph with node labels and edge labels is called a \textit{labeled graph}. 
An \emph{edge labeling} is a function $\Elabel:E \times \Time \to \Actions$.
A  \emph{node labeling} is a function $\Vlabel: V \times \Time \to 2^\Agents$. A node labeling $\Vlabel$ is consistent at time $t$ if $\Vlabel$ partitions $V$, i.e., if for any $u$ and $v$, $\Vlabel(v,t) \cap \Vlabel(u,t) \neq \emptyset$ implies $u = v$. 

An \emph{environment} is a tuple $(G,\Elabel)$ where $G$ is a labeled graph and $\Elabel$ is an edge-labeling. $\Elabel(e,t)$ is the label of edge $e$ at time $t$. The environment is said to be \emph{static} if the associated edge labeling is time-invariant. 
%Unless explicitly specified we assume that the environment is static. 
For a static environment $(G,\Elabel)$, $\delta: V \times \Actions \to 2^V$ is called the \emph{transition function}. $\delta$ is \emph{deterministic}, if for any $v,v_1,v_2 \in V$ and $s \in \Actions$, $v_1,v_2 \in \delta(v,s)$ implies $v_1 = v_2$. 
The extended transition function $\hat{\delta}: V \times \Actions^* \to 2^V$ is defined recursively as $\hat{\delta}(v,\epsilon) = v$ and $\hat{\delta}(v,w\cdot a) = \delta(\hat{\delta}(v,w),a)$. A static environment is deterministic, if the associated transition function $\delta$ is deterministic.
A word $w = w_0w_1w_2\dots$ is said to \emph{induce a path} in a graph $G$ starting at vertex $v_0$ if there exists a sequence of vertices $v_1v_2v_3\dots$ such that $v_i \in \delta(v_{i-1},w_{i-1})$. In a static deterministic environment, the \emph{final state} induced by a finite word $w = w_0w_1\dots w_n$ starting at $v$ is the node $\hat{\delta}(v,w)$.

A \emph{trajectory} $p$ in a static environment $(G,\Elabel)$ is a pair $(v,w)$ where $w \in \Actions^*$ is a finite word such that $w$ induces a path in $G$ starting at vertex $v$. The final state of a trajectory $p = (v,w)$ is the final state induced by $w$ on $v$ and is given by $\hat{\delta}(v,w)$.  The concatenation of a trajectory $p$ and a word $w'$ is $p\cdot w' = (v,w\cdot w')$.  For trajectory $p$, we denote its \emph{sub-trajectory} $(v,w[i:\ell])$ by $p[i:\ell]$. 
A \emph{joint trajectory} is a finite set of trajectories. 
 %If the label of $v$ is $\{1\}$, then label of $v$ following $w$ is $\emptyset$ and the label of the vertex $\hat{\delta}(v,w)$ is $\{1\}$.

For any vertex label $u$, $p=(v,w)$ is a trajectory for $u$ at time $t$ if $u \in \Vlabel(v,t)$ and $w$ induces a path from $v$. Define the final state of $u$ through $p$ as the final state of $(v,w)$. The final state of $u$ through trajectory $p = (v,w)$ is ($\hat{\delta}(v,w)$) and is denoted $u \dhxrightarrow[v]{w} \hat{\delta}(v,w)$.
Let $\Vlabel(v,t) = \{u\} \cup X$, $(v,w)$ be a trajectory and $\Vlabel(\hat{\delta}(v,w),t) = Y$. If agent $u$ follows trajectory $(v,w)$, then at time $t+|w|$ the vertex label of $v$ is $X$ and the vertex label of $\hat{\delta}(v,w)$ is $Y \cup \{u\}$.  
 
Given  an $n$-tuple of  symbols $e =  (e_1,\dots,e_n)$,  for $i \in [1,n],\prod_i (e)$ is the projection of $e$ on its $i$-th element denoted $(\prod_i e \defeq e_i)$. $\Replace_i(e,x)$ replaces the i$^{th}$ element of $e$ with $x$ ,i.e., $\Replace_i(e,x) = (e_1,\dots,e_{i-1},x,e_{i+1},\dots,e_n)$.

\section{Online Enforcers}
\label{sec:formulation}

\subsubsection*{Environment and Agents.}
We model the region of operation of the agents as a deterministic environment $(G,\Elabel)$ with a consistent node labeling $\Vlabel$. The set of all agents is $\Agents$, and it is the same as the set of all node labels. At time $t$, an agent $u$ is said to be at location $v$ if $u \in \Vlabel(v,t)$. An agent can move from a vertex $v$ to a vertex $v'$ in one time unit through an action $s$, if there is an edge with label $s$ between $v$ and $v'$. 
 If $(v,w)$ is a trajectory, and agent $u$ follows the trajectory starting at time $t$, then at time $t+i$, $(i \leq w)$ the state of the agent $u$ is $\hat{\delta}(v,w[0:i])$. Furthermore, $\Vlabel(v,t) = \{u\}$ and $\Vlabel(\hat{\delta}(v,w[0:i]),t+i) = \emptyset$. 
For any trajectory $(v,w)$, we drop the initial vertex when it is clear.
At time $t+i$,  $\Vlabel(v,t) = \emptyset$ and $\Vlabel(\hat{\delta}(v,w[0:i]),t+i) = \{u\}$.  
Boolean $goal_{u,t}$ is true when agent $u$ reaches its final state at time $t$, which is referred to as the agent having \emph{completed the goal}.
Formally,
$$
goal_{u,t} = \begin{cases}
\top \text{ if } u \text{ has reached its final state} \\ \text{\quad following the trajectory } (v,w),
\\
\bot \text{ otherwise.}
\end{cases}
$$

\noindent  Associated with any agent is a unique priority from $[1,|\Agents|]$, defined as $priority: \Agents \times \Time \to [1,|\Agents|]$, such that $priority(u_1,t) = priority(u_2,t)$ implies $u_1 = u_2$. At any time $t$, the priorities of the agents induce a total order $\prec_t$ among them. 
For agents $u_1$ and $u_2$ in $\Agents$, $u_1 \prec_t u_2$ if and only if $priority(u_1,t) < priority(u_2,t)$.

\begin{example}
In Figure \ref{fig:eg7}b, blue and green agents operate in a grid world. The vertices of the underlying labeled graph $G$ are the cells in the grid. There is an edge from a vertex to another, if they are adjacent in the grid (no diagonal edges).
The set $\Actions \defeq \{l,r,t,d\}$ of edge labels is the set of actions available to each agent. The set $\Agents = \{blue,green\}$ of vertex labels corresponds to the set of agents operating in the system. $\Vlabel((2,4),0) = \{green\}$ and $\Vlabel((4,2),0) = \{blue\}$, that is, blue agent is at $(4,2)$ and green agent is at $(2,4)$. The trajectory of the blue agent is $((4,2),lll)$ and that of the green agent is $((2,4),ddd)$. At time $t = 1$, the labeling function is $\Vlabel((2,3),1) = \{green\}$, $\Vlabel((2,3),1) = \{blue\}$, $\Vlabel((2,4),1) = \emptyset$, $\Vlabel((4,2),1) = \emptyset$.  The blue and the green agents have reached their goals at $t=3$. Therefore, $goal_{blue,3} = \top$ and $goal_{blue,2} = goal_{blue,1} = \bot$. The final state of the blue agent is $(2,1)$ and the green agent is$(2,1)$. That is, 
\end{example}
\begin{small}
\begin{align*}
    blue \dhxrightarrow[(4,2)]{lll}(1,2)~\text{and}~
    green \dhxrightarrow[(2,4)]{ddd}(2,1).
\end{align*}
\end{small}

\subsubsection*{Communication.}
The agents in the system can communicate when they are close to each other. Moreover, two agents $u_i$ and $u_j$ can also communicate if there is a sequence of agents $c_1\dots c_k$ such that $c_1$ is $u_i$, and $c_k$ is $u_j$ and there is a path of length less than or equal to $d$ between agents $c_{i}$ and $c_{i+1}$. Here $d$ is a positive integer referred to as the \emph{communication constant}. 
%At any time, there can be at most $|\Agents|$ such groups. 
At any time $t$, $\Agents_i(t)$ denotes the communication group of agent $u_i$. The agents in the same communication group know the partial trajectories of the other agents in the group upto length $\ell$. Formally, every agent $u$ knows the partial trajectory $(v_{u'},w_{u'})[0:\ell]$ of every other agent $u'$ in its communication group, where $(v_{u'},w_{u'})$ is the trajectory for agent $u'$. Henceforth, $\ell$ is referred to as the \emph{look-ahead}.
 
\begin{figure*}
\centering
\definecolor{green1}{rgb}{0,0.6,0}
\subfloat[Communication groups]{\begin{tikzpicture}[line join=round,x=1cm,y=1cm]
\begin{axis}[
scale=0.6,
x=0.5cm,y=0.5cm,
axis lines=middle,
ymajorgrids=true,
xmajorgrids=true,
xmin=0,
xmax=10,
ymin=0,
ymax=10,
xtick={-2,-1,...,25},
ytick={-2,-1,...,14},]

\begin{scriptsize}
\draw [->,line width=0.6pt] (4,7) -- (7,7);
\draw [->,line width=0.6pt] (2,7) -- (2,9.5);
\draw [->,line width=0.6pt] (2,5) -- (2,2);
\draw [line width=0.6pt,dashed] (3,6) circle (0.8cm);
\draw [line width=0.6pt,dashed] (6,2) circle (0.6cm);
\draw [->,line width=0.6pt] (6,1) -- (9,1);
\draw [->,line width=0.6pt] (6,3) -- (3,3);

\draw [fill=blue] (2,7) \Square{1.3pt};
\draw [fill=green1] (4,7) \Square{1.3pt};
\draw [fill=red] (2,5) \Square{1.3pt};
\draw [fill=black] (6,3) \Square{1.3pt};
\draw [fill=purple] (6,1) \Square{1.3pt};

\draw[color=blue] (2,6.3) node {t=0};
\draw[color=green1] (4,6.3) node {t=0};
\draw[color=red] (3.3,5) node {t=0};
\draw[color=purple] (5,1.5) node {t=0};
\draw[color=black] (6,2.5) node {t=0};

\end{scriptsize}
\end{axis}
\end{tikzpicture}}
\subfloat[Unsafe behavior]{\begin{tikzpicture}[line join=round,x=1cm,y=1cm]
\begin{axis}[
scale=0.6,
x=1cm,y=1cm,
axis lines=middle,
ymajorgrids=true,
xmajorgrids=true,
xmin=0,
xmax=5,
ymin=0,
ymax=5,
xtick={-3,-2,...,35},
ytick={-7,-6,...,17},]

\draw [->,line width=0.6pt] (2,4) -- (2,1);
\draw [dashed,->,line width=0.6pt] (4,2) -- (1,2);

\begin{scriptsize}

\draw [fill=green1] (2,4) circle (1.3pt);
\draw[color=green1] (1.5,4.3) node {t=0};

%\draw [fill=green1] (2,3) circle (1.3pt);
%\draw[color=green1] (1.5,3.3) node {t=1};

\draw [fill=green1] (1.8,2.2) circle (1.3pt);
\draw[color=green1] (1.5,2.5) node {t=2};

%\draw [fill=green1] (2,1) circle (1.3pt);
%\draw[color=green1] (1.5,0.8) node {t=3};

\draw [fill=blue] (4,2) \Square{1.3pt};
\draw[color=blue] (4,2.5) node {t=0};

%\draw [fill=blue] (3,2) \Square{1.3pt};
%\draw[color=blue] (3,2.5) node {t=1};

\draw [fill=blue] (2,2) \Square{1.3pt};
\draw[color=blue] (2.5,1.6) node {t=2};

%\draw [fill=blue] (1,2) \Square{1.3pt};
%\draw[color=blue] (1,1.6) node {t=3};
\end{scriptsize}
\end{axis}
\end{tikzpicture}}
\subfloat[Safe behavior]{\begin{tikzpicture}[line join=round,x=1cm,y=1cm]
\begin{axis}[
scale=0.6,
x=1cm,y=1cm,
axis lines=middle,
ymajorgrids=true,
xmajorgrids=true,
xmin=0,
xmax=5,
ymin=0,
ymax=5,
xtick={-3,-2,...,35},
ytick={-7,-6,...,17},]

\draw [dashed,line width=0.6pt] (4,2)-- (3,2);
\draw [dashed,line width=0.6pt] (3,2)-- (3,3);
\draw [dashed,line width=0.6pt] (3,3)-- (2,3);
\draw [dashed,line width=0.6pt] (2,3)-- (1,3);

\draw [dashed,line width=0.6pt,->] (1,3) -- (1,2);
\draw [->,line width=0.6pt] (2,4) -- (2,1);

\begin{scriptsize}
\draw [fill=green1] (2,4) circle (1.3pt);
\draw[color=green1] (1.5,4.3) node {t=0};

%\draw [fill=green1] (1.8,3.2) circle (1.3pt);
%\draw[color=green1] (1.5,3.5) node {t=1};

%\draw [fill=green1] (2,2) circle (1.3pt);
%\draw[color=green1] (1.5,2.3) node {t=2};

\draw [fill=green1] (2,1) circle (1.3pt);
\draw[color=green1] (1.5,0.5) node {t=3};

\draw [fill=blue] (4,2) \Square{1.3pt};
\draw[color=blue] (4,2.4) node {t=0};

%\draw [fill=blue] (3,2) \Square{1.3pt};
%\draw[color=blue] (3,1.6) node {t=1};

%\draw [fill=blue] (3,3) \Square{1.3pt};
%\draw[color=blue] (3.5,3.3) node {t=2};

%\draw [fill=blue] (2,3) \Square{1.3pt};
%\draw[color=blue] (2.5,3.3) node {t=3};

%\draw [fill=blue] (1,3) \Square{1.3pt};
%\draw[color=blue] (0.6,3.3) node {t=4};

\draw [fill=blue] (1,2) \Square{1.3pt};
\draw[color=blue] (1,1.7) node {t=5};
\end{scriptsize}
\end{axis}
\end{tikzpicture}}
\caption{(a) The communication groups at time $t=0$. The communication constant $d$ is $2$. The agents in the same group have been encircled. The black and purple agents are in a group. While blue, green and the red agents are in a different group. (b) Grid world example: There are two agents (blue and green). Their intended trajectories are marked by lines. Their positions at different times are also shown. At time $t=2$ the blue and green agents occupy the same cell, hence $\Safety(2) = \bot$. However, the system is still safe at times $t=0$ and $t=1$. (c)  The modified trajectory for the blue agent as a consequence of the enforcer $S(blue,0)$ on the blue agent at $t=0$ is shown on the right.}
\label{fig:eg7}
\end{figure*}
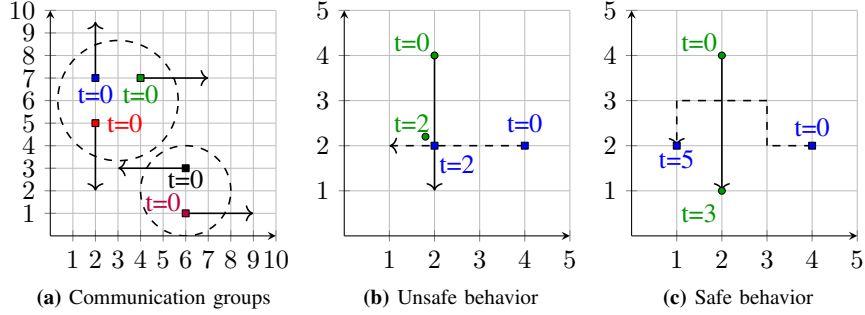
\begin{example}
In Figure \ref{fig:eg7}a, the trajectories of 5 agents are shown. The communication constant $d = 2$ and $\ell = 3$. 
At time $0$, the $\Agents_{blue}(0)=\Agents_{green}(0)=\Agents_{red}(0)= \{blue,green,red\}$ and $\Agents_{purple}(0) = \Agents_{black}(0) = \{purple,black\}$. All agents in the first group know that the trajectory of the red agent is $((2,5),ddd)$, the blue agent is $((2,7),tt)$ and the green agent is $((4,7),lll)$. In the second group, all agents know that the trajectory of the purple agent is $((6,3),ll)$ and the black agent is $((6,1),rr)$. 
\end{example}

\subsubsection*{Safety Functions.}
A \emph{safety} property $\Safety$ is a function $\Safety: 2^P \to \Bool$ from the vertex labeling to Booleans, where $P = V \times 2^\Agents$.
As the vertex labels themselves depend on time, we extend the notion of safety to safety at time $t$. If the vertex labeling at time $t$ is safe, then $\Safety(t) = \top$.
\begin{example}
Consider a safety function $\Safety:  \Time \to \Bool$ defined as
\[   
\Safety(t) = 
     \begin{cases}
       \top & \text{if for any } v \in V~ |\Vlabel(v,t)|=1 \\& \text{and } \Vlabel(u,t) = \Vlabel(v,t) \text{ implies } u = v, \\
       \bot & \text{otherwise.} \\
    \end{cases}
\]
Simply if $X$ is the vertex labeling at some time, $X$ is \emph{safe} if there is only one agent at any location at any given time and any agent can be present at only one location at any time (consistent label). 
In the grid world in Figure \ref{fig:eg7}c, the system is safe at all times. In the grid world in Figure \ref{fig:eg7}b, the system is unsafe at time $t = 2$ as the blue and the green agents occupy the same location.
\label{eg:safety_fn}
\end{example}
%\vspace{-5mm}
Next we extend the notion of safety to trajectories. For every agent $u$, denote its trajectory by $(v_u,w_u)$ . The system is \emph{safe on the trajectories}, if the system is safe at all times. Formally, $\Safety(\Agents) = \top$ if for all $t \in [0,m]$ ,$\Safety(t) = \top$, where $m = \max \{|w_u|:u \in \Agents \}$.

\begin{example}
Consider safety as described in Example 3. Then, in Figure \ref{fig:eg7}c, the system is safe on trajectories $ltlld$ and $ddd$ for the blue and green agents respectively. However, in Figure \ref{fig:eg7}b, the system is not safe as it is unsafe at $t=2$. We say that the blue and green agents violate safety.
\end{example}

Recall the agents in the system have only limited communication. Therefore, we are interested in a subclass of safety functions that can be enforced across the system by enforcing them locally in every communication group. 
Suppose $\Safety_i$ is a safety property such that $\Safety_i(t) = \top$ if and only if the agents in the communication group $\Agents_i$ are safe.
%For every communication group} $\Agents_i$, $\Safety_i$ be a safety property that the agents in group $\Agents_i$ are safe. 
Then, the property $\Safety(t)$ defined as $$ \Safety(t) \defeq \bigwedge_{i \in 1..|\Agents|} \Safety_i(t)$$ is a \emph{local} safety property.
%A safety property $\Safety$ is a \emph{local safety property} if  $$\Safety(t) = \top \Leftrightarrow  \forall i \in [1,|\Agents|] ~\Safety_i(t) = \top.$$ 
Observe that the safety function defined in Example 3 is a local safety property.

\subsubsection*{Enforcers.}
Informally, the purpose of an enforcer is to take a (possibly incorrect) trajectory produced by a running system and to transform it into a trajectory that is safe with respect to a local safety function $\Safety$ that we want to enforce. Abstractly, an enforcer can be seen as a function that transforms trajectories. 

Denote by $S(u,t)$ the enforcer acting on agent $u$ at time $t$. $S(u,t)$ is a pair of \emph{partial} functions $\langle S_1(u,t),S_2(u,t) \rangle$. $S_1(u,t)$ accepts a finite trajectory for each agent in the system and returns a modified trajectory for agent $u$. $S_2(u,t)$ accepts a vector of current priorities and a vector of Booleans $goal_{u,t}$ for the agents in the system. It returns a vector of priorities with only the priority of its corresponding agent $u$ possibly changed. Formally, the enforcer on agent $u$ at time $t$ is a pair of partial functions $\langle S_1(u,t),S_2(u,t) \rangle$ such that
\begin{align*}
    &S_1(u,t) : \Lang_1 ^{|\Agents|} \to \Lang_1^{|\Agents|} \text { and } \\
    &S_2(u,t) : [1,|\Agents|]^{|\Agents|} \times \mathbb{B}^{|\Agents|} \to [1,|\Agents|]^{|\Agents|} \times \mathbb{B}^{|\Agents|}, 
\end{align*}
\noindent where $\Lang_1 = [1,|\Agents|] \times V \times \Actions^*$. The above definition of a enforcer is quite general as both the input trajectory and the modified trajectory can be of arbitrary length. Next, we introduce $(\ell,\ell')$-enforcers. For every agent in the system, these enforcers accept trajectories of length at most $\ell$ and return a new trajectory of length at most $\ell' \geq \ell$.  
Formally, an $(\ell,\ell')$-enforcer 
%\footnote{In the rest of this paper, we refer to $(\ell,\ell')$-enforcers as just enforcers.} 
on agent $u$ at time $t$ is a pair of partial functions $\langle S_1(u,t),S_2(u,t) \rangle$ such that 
\begin{align*}
    &S_1(u,t) : \Lang_I ^{|\Agents|} \to \Lang_O ^{|\Agents|} \text{ and } \\ 
    &S_2(u,t) : [1,|\Agents|]^{|\Agents|} \times \mathbb{B}^{|\Agents|} \to [1,|\Agents|]^{|\Agents|} \times \mathbb{B}^{|\Agents|}, 
\end{align*}
where $\Lang_I = [1,|\Agents|] \times V \times \Actions^{\leq \ell}$ and $\Lang_O = [1,|\Agents|] \times V \times \Actions^{\leq \ell'}.$

\begin{example}
In Figure \ref{fig:eg7}b, the blue and green agents occupy the same location at time $t=2$. However, in Figure \ref{fig:eg7}c, the blue agent's trajectory has been modified by the enforcer onboard. As a result they never occupy the same position at the same time. Priority of the blue agent is 1 and the priority of the green agent is 2. $S_1(blue,0)\big((blue,1,lll)(green,2,ddd)\big) = (blue,1,ltlld)(green,2,ddd)$.
\end{example}

\subsubsection*{Composition of enforcers.}

When multiple agents act in the same system, their trajectories and their priorities are modified only by their respective enforcers. However, the individual enforcers act together to make the system safe. The joint behavior of enforcers are captured by functional composition.  We first define \emph{composition of enforcers} for two agents $u_1$ and $u_2$ at time $t$. If $u_1 \prec_t u_2$ then
\begin{align*}
    S(u_1,t) \circ S(u_2,t) &= S(u_2,t) \circ S(u_1,t) \\
    &=\langle S_1(u_2,t) \circ S_1(u_1,t), S_2(u_2,t) \circ S_2(u_1,t)\rangle.
\end{align*}
%$$S(u_1,t) \circ S(u_2,t) = S(u_2,t) \circ S(u_1,t) =  \langle S_1(u_2,t) \circ S_1(u_1,t), S_2(u_2,t) \circ S_2(u_1,t)\rangle.$$
This composition can be extended to an arbitrary number of enforcers by composing their constituent functions in the order $\prec_t$.

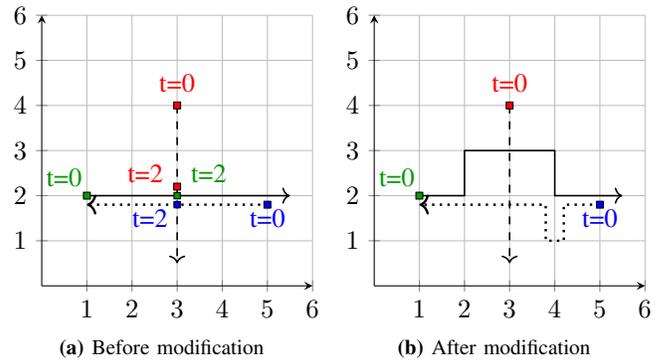
\begin{figure}[htb!]
\definecolor{green1}{rgb}{0,0.6,0}
\centering
\subfloat[Before modification]{
\begin{tikzpicture}[line join=round,x=1cm,y=1cm]
\begin{axis}[
scale=0.6,
x=1cm,y=1cm,
axis lines=middle,
ymajorgrids=true,
xmajorgrids=true,
xmin=0,
xmax=6,
ymin=0,
ymax=6,
xtick={0,1,...,5,6},
ytick={0,1,...,6},
]
\draw [dashed,->,line width=0.6pt] (3,4) -- (3,0.5);
\draw [->,line width=0.6pt] (1,2) -- (5.5,2);
\path[draw,dotted,<-,line width=0.9pt] (1,1.8) -- (2,1.8) -- (3,1.8) -- (4,1.8) -- (5,1.8);
\begin{scriptsize}
\draw [fill=red] (3,4) \Square{1.3pt};
\draw [fill=green1] (1,2) \Square{1.3pt};
\draw [fill=blue] (5,1.8) \Square{1.3pt};

\draw [fill=green1] (3,2) \Square{1.2pt};
\draw [fill=blue] (3,1.8) \Square{1.2pt};
\draw [fill=red] (3,2.2) \Square{1.3pt};

\draw[color=red] (3,4.5) node {t=0};

\draw[color=red] (2.3,2.5) node {t=2};

\draw[color=blue] (2.4,1.5) node {t=2};
\draw[color=blue] (5,1.5) node {t=0};

\draw[color=green1] (0.5,2.4) node {t=0};
\draw[color=green1] (3.7,2.5) node {t=2};
\end{scriptsize}
\end{axis}
\end{tikzpicture}}
\subfloat[After modification]{
\begin{tikzpicture}[line join=round,x=1cm,y=1cm]
\begin{axis}[
scale=0.6,
x=1cm,y=1cm,
axis lines=middle,
ymajorgrids=true,
xmajorgrids=true,
xmin=0,
xmax=6,
ymin=0,
ymax=6,
xtick={-1,0,...,30},
ytick={-1,0,...,18},
]

\draw [->,dashed,line width=0.6pt] (3,4) -- (3,0.5);
\path[draw,dotted,->,line width=0.9pt] (5,1.8) -- (4.2,1.8)-- (4.2,1)-- (3.8,1) -- (3.8,1.8)--(1,1.8);
\path[draw,->,line width=0.6pt] (1,2)--(2,2)--(2,3)--(3,3)--(4,3)--(4,2)--(5,2)--(5.5,2);

\begin{scriptsize}
\draw [fill=red] (3,4) \Square{1.3pt};
\draw [fill=green1] (1,2) \Square{1.3pt};
\draw [fill=blue] (5,1.8) \Square{1.3pt};

\draw[color=red] (3,4.5) node {t=0};
\draw[color=blue] (5,1.5) node {t=0};
\draw[color=green1] (0.5,2.4) node {t=0};

\end{scriptsize}
\end{axis}
\end{tikzpicture}}
\caption{Original trajectories of the agents and their modified trajectories. The system is safe after the modification of the trajectories by the enforcers. (a) Blue, green and red agents violate safety at time 2 in (3,2). (b) The trajectories of blue and green agents have been modified by the respective enforcers. However, red agent does not deviate.}
\label{fig:eg4}
\end{figure}

\begin{example}
In Figure \ref{fig:eg4}a, at $t=0$, blue agent has priority 1, green agent has priority 2 and red agent priority 3. All the three agents occupy the same location (3,2) at time 2. The enforcer on the blue agent modifies its trajectory first and this modification is relayed to the green and the red agents. The enforcer on the green agent then modifies its trajectory. There is no change in the trajectory of the red agent as now there is no safety violation. The modified trajectories are shown in Figure \ref{fig:eg4}b.
\end{example}

\subsubsection*{Properties of enforcers.}
We define the desired properties for a set of enforcers. 
%Let the set $\{p_{u}~|~u \in \Agents\}$ be the trajectories of the agents at time $t$. 
For any agent $u$ in $\Agents$, let $S(u,t)$ be the enforcer acting on agent $u$ at time $t$, $p_u$ its original trajectory at time $t$ and $p'_u$ its modified trajectory at time $t$.  
%\Dj{Let} $S(u_1,t),\dots,S(u_n,t)$ be the enforcers on the agents $u_1\ldots u_n$, respectively \Dj{and} $ \{p'_{u_1},p'_{u_2},\dots,p'_{u_n}\}$ be \Dj{their} modified trajectories as a consequence of the enforcers. 
The enforcers $\{S(u,t)~|~u \in \Agents \}$ are \emph{correct} if the modified trajectories 
$\{p'_{u}~|~ u \in \Agents\}$ 
are safe and the final states are unchanged. 
An enforcer $S(u,t)$ is said to \emph{cause minimum deviation} if $p'_u = p_u$ when $\Safety_u(t) = \top$. 
The enforcers $\{S(u,t)~|~u \in \Agents\}$ are \emph{bounded}, if there exists $\ell$ and  $\ell'$ in $\mathbb{N}$ such that all the enforcers are $(\ell,\ell'$)-enforcers. 
We later prove that boundedness and correctness ensure that all agents progress in finite time, while still guaranteeing the safety of the system. 
We now state the problem studied in this paper.

\begin{prob}
Given a set $\Agents = \{u_1,\dots,u_n\}$ of agents and a set $\{(v_u,w_u)~|~u \in \Agents \text{ and } |w_u| \leq \ell \}$ of their trajectories, 
construct a set $\{S(u,t)~|~u \in \Agents \text{ and } t \in \Time\}$ of enforcers such that these enforcers are correct, cause minimum deviation and bounded. 
\end{prob}

\section{Online Synthesis}
Informally, the enforcer $S(u,t)$ onboard agent $u$ can directly affect only the trajectory of agent $u$. 
Every enforcer has access to a \emph{pathfinder} that modifies the corresponding trajectory. If a potential safety violation is detected, the enforcer on the agent with the lowest priority calls the pathfinder first. The order $\prec_t$ determines the next agent potentially required to modify its trajectory.
The pathfinder resolves conflicts, if any, within the group. If a new agent comes into the group, the pathfinder is called by the enforcer on the lowest priority agent. The trajectory of an agent is not modified if it is not involved in a safety violation. 
A \emph{ordering mechanism} maintains the order $\prec_t$ among the agents. When an agent reaches its final state, then its intended trajectory is updated and the ordering mechanism also updates the priorities.

\subsubsection*{Pathfinder.}

Informally, the pathfinder returns a new path whenever called. It constructs a graph and searches for a path in the graph from a vertex corresponding to the current location to a vertex in a target set corresponding to the agent's final state. This graph does not have any outgoing edges from vertices that correspond to unsafe configurations. After a single call to the pathfinder, the maximum length of the modified trajectory is at most $\ell + k$. where $\ell \leq k < d$ is some constant. But such a path may not always exist. In this case, the pathfinder returns a path that at the minimum keeps the system safe.

\begin{assume}
The graph $G$ is 2-edge connected and there is self loop on every vertex in $G$.
\end{assume}

%The alternate target set ensures safety and the that the agents is at most $k$ steps away from its final state. 

For any agent $u \in \Agents$, its trajectory $p_u$ is $p_u=(v_u,w_u)$.
%such that \Dj{$|w_u| \leq \ell \leq d$}, i.e., the next goal for any agent is some state that is visible to it. 
The final state of agent $u$ is $v_u^f$ and $priority(u,t)$ is its priority at time $t$. 

%Formally, 

\begin{example}
Figure \ref{fig:eg5} depicts the graph $G_{blue}^0$ constructed by the pathfinder on the blue agent for the example in Figure \ref{fig:eg4}a. The initial position is 
$v_{init} = ((4,2),0)$ and the target set $F = \{((1,2),4),((1,2),5)),((1,2),6))\}$ is marked red. The nodes occupied by some other higher priority agent at the time are marked by black circles. These black nodes do not have any out-edges. The pathfinder returns a path from $v_{init}$ to some vertex in $F$. The positions of the green agent are unmarked as it has a lower priority.
\begin{figure}[!htb]
    \centering
    \begin{tikzpicture}[scale=0.8]
    \draw [dashed](0,0) ellipse (0.5cm and 3cm);
    \draw [dashed](1.5,0) ellipse (0.5cm and 3cm);
    \draw [dashed](3,0) ellipse (0.5cm and 3cm);
    \draw [dashed](4.5,0) ellipse (0.5cm and 3cm);
    \draw [dashed](6,0) ellipse (0.5cm and 3cm);
    \draw [dashed](7.5,0) ellipse (0.5cm and 3cm);
    \draw [dashed](9,0) ellipse (0.5cm and 3cm);
    \draw (0,-3.5) node {$t=0$};
    \draw (1.5,-3.5) node {$t=1$};
    \draw (3,-3.5) node {$t=2$};
    \draw (4.5,-3.5) node {$t=3$};
    \draw (6,-3.5) node {$t=4$};
    \draw (7.5,-3.5) node {$t=5$};
    \draw (9,-3.5) node {$t=6$};
    
    \draw (0,2) circle (2pt) node [anchor=north] {0,0};
    \draw (0,0.8) node {$\vdots$};
    \draw [fill = blue] (0,0) circle (2pt) node [anchor=north] {4,2};
    %\draw (0,-1.2) circle (1pt);
    \draw (0,-1) node {$\vdots$};
    %\draw (0,-0.8) circle (1pt);
    \draw (0,-2) circle (2pt) node [anchor=north] {5,5};
   
    \draw (1.5,2) circle (2pt) node [anchor=north] {0,0};
    \draw [fill = black] (1.5,0.5) circle (2pt) node [anchor=north] {2,3};
    %\draw [fill = black] (1.5,1) circle (2pt) node [anchor=north] {1,2};
    \draw (1.5,-2) circle (2pt) node [anchor=north] {5,5};
    
    \draw (1.5,-0.9) node {$\vdots$};
    %\draw (1.5,-1.2) circle (1pt);
    %\draw (1.5,-1) circle (1pt);
    %\draw (1.5,-0.8) circle (1pt);
    
    \draw (3,2) circle (2pt) node [anchor=north] {0,0};
    \draw [fill=black] (3,0.5) circle (2pt) node [anchor=north] {2,2};
    \draw (3,-2) circle (2pt) node [anchor=north] {5,5};
    \draw (3,-1) circle (2pt) node [anchor=north] {i,j};

    \draw (4.5,0.3) circle (2pt) node [anchor=north] {i-1,j};
    \draw (4.5,-0.2) circle (2pt) node [anchor=north] {i,j-1};
    \draw (4.5,-0.8) circle (2pt) node [anchor=north] {i,j+1};
    \draw (4.5,-1.5) circle (2pt) node [anchor=north] {i+1,j};
    
    \draw  (3.08,-1) -- (4.45,0.28);
    \draw  (3.08,-1) -- (4.45,-0.2);
    \draw  (3.08,-1) -- (4.45,-0.8);
    \draw  (3.08,-1) -- (4.45,-1.5);

    \draw (4.5,2.5) circle (2pt) node [anchor=north] {0,0};
    \draw [fill=black](4.5,0.8) circle(2pt) node[anchor = north] {3,2};
    \draw [fill=black](4.5,1.5) circle(2pt) node[anchor = north] {2,1};
    \draw (4.5,-2.8) circle (2pt) node [anchor=south] {5,5};
    
    \draw [fill=red] (6,0) circle (2pt)node [anchor=north] {1,2};
    \draw (6,2) circle (2pt) node [anchor=north] {0,0};
    \draw [fill=black](6,0.6) circle(2pt) node[anchor = north] {3,1};
    %\draw [fill=black](6,1.2) circle(2pt) node[anchor = north] {2,1};
    \draw (6,-2) circle (2pt) node [anchor=north] {5,5};
    
    %\draw (6,-1.2) circle (1pt);
    \draw (6,-1) node {$\vdots$};
    %\draw (6,-0.8) circle (1pt);

    \draw [fill=red] (7.5,0) circle (2pt)node [anchor=north] {1,2};
    \draw (7.5,2) circle (2pt) node [anchor=north] {0,0};
    \draw [fill=black](7.5,0.6) circle(2pt) node[anchor = north] {3,1};
    %\draw [fill=black](7.5,1.2) circle(2pt) node[anchor = north] {2,1};
    \draw (7.5,-2) circle (2pt) node [anchor=north] {5,5};
    
    %\draw (7.5,-1.2) circle (1pt);
    \draw (7.5,-1) node {$\vdots$};
    %\draw (7.5,-0.8) circle (1pt);

    \draw [fill=red] (9,0) circle (2pt)node [anchor=north] {1,2};
    \draw (9,2) circle (2pt) node [anchor=north] {0,0};
    \draw [fill=black](9,0.6) circle(2pt) node[anchor = north] {3,1};
    %\draw [fill=black](9,1.2) circle(2pt) node[anchor = north] {2,1};
    \draw (9,-2) circle (2pt) node [anchor=north] {5,5};
    %\draw (9,-1.2) circle (1pt);
    \draw (9,-1) node {$\vdots$};
    %\draw (9,-0.8) circle (1pt);
    
    \end{tikzpicture}
    \caption{The graph constructed by the pathfinder for the blue agent in Example 5. The set of vertices is $\{0,\dots5 \} \times \{0,\dots5 \} \times  \{ 0,\dots,6\}$. There is an edge between $((i,j),t_1)$ and $((i',j'),t_2)$ if and only if $|i'-i|+|j'-j| = 1$ and $t_2 = t_1 + 1$ and there is no out-edge from a black vertex (a black vertex corresponds to it being occupied by some agent at the corresponding time). $v_{init}$ is blue and the vertices in $F$ are red. The positions of the green agent are unmarked as it has a lower priority than the blue agent.}
    \label{fig:eg5}
\end{figure}
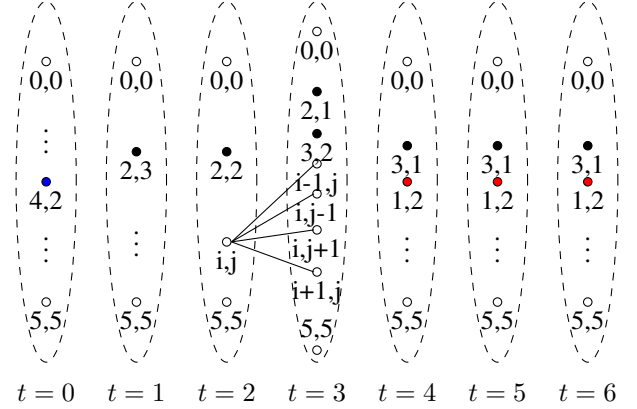
\end{example}

%In \Dj{Example}~\ref{eg:collavoid}, we describe the graph and the initial and target states for the safety function corresponding to collision-avoidance \Dj{from} Example 3 for some agent \Dj{$u$.}
\begin{example}\label{eg:collavoid}
For the safety function defined in Example 3, the 
pathfinder constructs the graph $G^t_{u} = (V', E')$ where $V'=V\times[t,t+d(v_u,v_u^f)+k]$. There is an edge with label $e$ between $(v,t)$ and $(v',t' +1)$, if in $G$ there is an edge $(v,v')$ with label $e$ and $\Vlabel(v,t') = \emptyset$, i.e., there are no higher priority agents occupying the same state. 
The target set $F$ is $\{(v^f_{u} ,t') | t+ d(v_u,v_u^f) \leq t' \leq t + d(v_u,v_u^f) + k\}$ and the initial state is $v_{init} = (v_{u},t)$. %\Dj{ The alternate target set 
%$F' = \{(v,t') | \Vlabel(v,t') = \emptyset~,t+ \ell \leq t' \leq t + \ell + k \}.$ 
%$F' = \{(v,k) | \Vlabel(v,k) = \emptyset~\text{and there is a path of length at most}~k~\text{between}~v~\text{and} ~v_u^f \}.$
%$F' = \{(v,t+1) | \Vlabel(v,t+1) = \emptyset \}.$
%The alternate target set corresponds to all the states that are not occupied by agents with higher priorities.
\end{example}

\subsubsection*{Occupancy Graph.}
The occupancy graph $O^t_u$ is similar to the pathfinder graph, however none of the edges are removed from the occupied states. The occupied states are labeled with the corresponding agents. Formally, the occupancy graph $O^t_{u} = (V', E')$ where $V'=V\times[t,t+d(v_u,v_u^f)+k] \setminus \{(v,t')|~v~\text{ is occupied by highest priority agent at } t'\}$. There is an edge with label $e$ between $(v,t)$ and $(v',t' +1)$, if in $G$ there is an edge $(v,v')$ with label $e$. If $u' \in \Vlabel(v,t)$ and $priority(u) \geq priority(u')$, then $u' \in \Vlabel((v,t))$.  

\subsubsection*{General Pathfinder Graph.}

Next, we present the pathfinder construction for any local safety property $\Safety$ for agent $u$. For this, we need the safety function $\bar{\varphi}_u(t)$ defined as:
$$\bar{\varphi}_u(t) \defeq \bigwedge_{priority(u)\leq priority(i)} \Safety_i(t) .$$
The safety function $\bar{\varphi}_u(t)$ ensures that all the agents with priorities higher than $u$ are safe.
$G_{u}^t = (V',E')$ is the graph whose nodes $V'$ are $V' = P \times [t,t+d(v_u,v_u^f)+k]$. Recall $P = V \times 2^\Agents$. There is an edge between $(v,t_1)$ to $(v',t_2)$ if i) $\Safety_u(v)  = \top$, ii) $(v,v')$ is an edge in $G$, iii) $t_2 = t_1 + 1$, iv) $\bar{\varphi}_u(t) = \top$, and v) all the other higher priority agents are following their trajectories, i.e., \begin{align*}
        & u_i \in \Vlabel\left( \hat{\delta}(u_i^t,\prod_i W[0:t_1]),t_1 \right) \\
        \text{ and } 
        & u_i \in \Vlabel\left(( \hat{\delta}(u_i^t,\prod_i W[0:t_2]),t_2\right).
\end{align*}
   
The initial vertex $v_{init}$ is $(a,t)$ where $a = v_u \times \Vlabel(v_u,t)$ and the set of target vertices $F$ is
$F=\{ (v,j)|v \in P' \text{ and } t+d(v_u,v_u^f) \leq j \leq t+d(v_u,v_u^f)+k \}$,
where $P' \subseteq P$ is a subset of all vertex labeling such that agent $u$ has reached its final state. %The alternate target set 
%$F' = \{ (v,t')~|~\bar{\Safety}_u(t') = \top \land u \in \Vlabel(v,t') \land t+ \ell \leq t' \leq t + \ell + k \}$.
%$F' = \{ (v,\ell)|\bar{\Safety}_u(\ell)=\top\land u\in\Vlabel(v,\ell)~\text{and there is a path of length at most}~\ell~\text{between}~v~\text{and} \\~v_u^f \}$.
%$F' = \{ (v,t+1)|\bar{\Safety}_u(t+1)=\top\land u\in\Vlabel(v,t+1)~\text{and there is a path of length at most}~\ell~\text{between}~v~\text{and} \\~v_u^f \}$.
Next, we describe the working of the pathfinder. The pathfinder constructs the graph $G_u^t$, initializes  $v_{init}$, a set $F$ of target vertices. The pathfinder then returns a path from $v_{init}$ to some state in $F$ if it exists, else it returns a random path of length 1 that only ensures safety. If no such path exists, the agent finds a shortest path in the occupancy graph to a vertex that ensures safety. This path may have other agents. All the other agents are forced to move 1 step along this path. If $p = v_1 v_2 \dots v_i v_{i+1}$ is the path from the occupancy graph, and agent $m$ is at $v_i$, then the trajectory of $m$ is replaced by $(v_i,v_{i+1})$. In short, if some lower priority agent cannot plan around the higher agent, then it might disturb all the agents other than the highest priority agent to ensure safety. However, the highest priority agent's path cannot be changed.
In the worst case, the pathfinder ensures that the agent with the highest priority can progress without any modifications. In essence lower priority agents progress, if they can plan around the highest priority agent.
\begin{smaller}
\begin{algorithm}
\KwResult{Safe path for $u$}
Initialize $G_u^t$ and $O_u^t$\;
Initialize $v_{init}$ and $F$\;
$P$ = path in $G_u^t$ from $v_{init}$ to $F$\;
\If{$P$ exists}{return $p$\;}
\Else{
$P$ = shortest path to an unoccupied vertex in $O_u^t$\;
\ForAll{vertex $v_i \in P$}{
{
\If{$\Vlabel(v_i,t) = a$}{$w_a = (v_i,v_{i+1})$\;}
}
}

}
\caption{Pathfinder on agent $u$}
\label{algo:pathfinder}
\end{algorithm}
\end{smaller}

\subsubsection*{Ordering Mechanism.}
The priorities of the agents cannot remain static with time. Otherwise, some agent might be forced to change its trajectory infinitely often. 
%The $S_2$ component maintains the order among agents dynamically.
In the sequel, we present the ordering mechanism. 

\paragraph*{Overview of Ordering Mechanism for Two Agents.}

\noindent Consider a system with two agents $a$ and $b$ that have communicated, i.e., observed each other's trajectories. Agent $a$ maintains a flag $c_a^b$ and agent $b$ maintains a flag $c^a_b$. If agent $a$ has reached its final state after communicating with agent $b$, then the flag $c^b_a$ is set to 1. 
Suppose agent $b$ is yet to reach its final state and there is a safety violation after $a$ has completed its goal, then in order to ensure freedom from locks, agent $a$ is forced to modify its trajectory. When agent $b$ reaches its final state, both agents have uniformly completed their goals and the flags are reset to $0$. The above procedure is equivalent to the standard binary semaphores algorithm to achieve process synchronization ~\cite{silberschatz2018operating}. 

\begin{example}
In Figure \ref{fig:eg6}, the agents are following the modified trajectories in Figure \ref{fig:eg7}c. The priority of the agent changes once the agent reaches its final state. More precisely, the agent gets the lowest priority once it reaches its final state. 
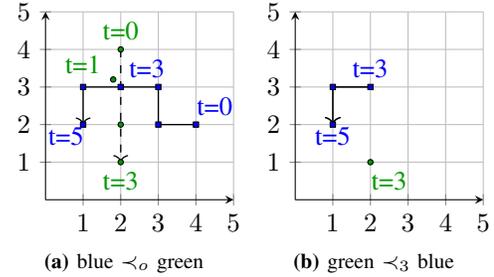
\begin{figure}[htb!]
\centering
\definecolor{green1}{rgb}{0,0.6,0}
\subfloat[blue $\prec_o$ green]{
\begin{tikzpicture}[line join=round,x=1cm,y=1cm]
\begin{axis}[
scale=0.5,
x=1cm,y=1cm,
axis lines=middle,
ymajorgrids=true,
xmajorgrids=true,
xmin=0,
xmax=5,
ymin=0,
ymax=5,
xtick={-3,-2,...,35},
ytick={-7,-6,...,17},]
\draw [line width=0.5pt] (4,2)-- (3,2);
\draw [line width=0.5pt] (3,2)-- (3,3);
\draw [line width=0.5pt] (3,3)-- (2,3);
\draw [line width=0.5pt] (2,3)-- (1,3);
\draw [line width=0.5pt,->] (1,3) -- (1,2);
\draw [->,line width=0.5pt,dashed] (2,4) -- (2,1);
\begin{scriptsize}
\draw [fill=green1] (2,4) circle (1pt);
\draw[color=green1] (2,4.5) node {t=0};

\draw [fill=green1] (1.8,3.2) circle (1pt);
\draw[color=green1] (1,3.6) node {t=1};

\draw [fill=green1] (2,2) circle (1pt);
%\draw[color=green1] (2,1.5) node {t=2};

\draw [fill=green1] (2,1) circle (1pt);
\draw[color=green1] (2,0.5) node {t=3};

\draw [fill=blue] (4,2) \Square{1pt};
\draw[color=blue] (4.5,2.5) node {t=0};

\draw [fill=blue] (3,2) \Square{1pt};
%\draw[color=blue] (3.5,1.5) node {t=1};

\draw [fill=blue] (3,3) \Square{1pt};
%\draw[color=blue] (3,3.5) node {t=2};

\draw [fill=blue] (2,3) \Square{1pt};
\draw[color=blue] (2.7,3.5) node {t=3};

\draw [fill=blue] (1,3) \Square{1pt};
%\draw[color=blue] (0.6,3.3) node {t=4};

\draw [fill=blue] (1,2) \Square{1pt};
\draw[color=blue] (0.55,1.7) node {t=5};
\end{scriptsize}
\end{axis}
\end{tikzpicture}}
\subfloat[green $\prec_3$ blue]{
\begin{tikzpicture}[line join=round,x=1cm,y=1cm]
\definecolor{green1}{rgb}{0,0.6,0}
\begin{axis}[
scale=0.5,
x=1cm,y=1cm,
axis lines=middle,
ymajorgrids=true,
xmajorgrids=true,
xmin=0,
xmax=5,
ymin=0,
ymax=5,
xtick={-1,0,...,25},
ytick={-1,0,...,14},]
\draw [line width=0.6pt] (2,3) -- (1,3);
\draw [->,line width=0.6pt] (1,3) -- (1,2);
\begin{scriptsize}
\draw [fill=green1] (2,1) circle (1pt);
\draw[color=green1] (2.5,0.5) node {t=3};

\draw [fill=blue] (1,3) \Square{1pt};

\draw[color=blue] (2,3.5) node {t=3};
\draw [fill=blue] (2,3) \Square{1pt};

\draw [fill=blue] (1,2) \Square{1pt};
\draw[color=blue] (1,1.7) node {t=5};
\end{scriptsize}
\end{axis}
\end{tikzpicture}}
\caption{ The blue and green agents are following their modified trajectories from Example 3. Initially, the blue agent has a lower priority; hence, it is forced to modify its path. When the green agent reaches its final state at $t=3$, the green agent is assigned a lower priority. The blue agent's priority is higher than the green agent's. Again at $t=5$, the priorities change since the blue agent has reached its final state.}
\label{fig:eg6}
\end{figure}
\end{example}

\paragraph*{Extension to Arbitrary Number of Agents.}

We extend the procedure outlined above to multiple agents. Each pair of agents $u_i$ and $u_j$ maintain two Booleans between them (each of them is analogous to a binary semaphore) that are used to measure relative progress. 
%\Dj{That is,} each agent maintains a vector of \Dj{Booleans} corresponding to all the other agents in the system. 
%\Dj{Every agent} also maintains a set for \Dj{keeping a record} of all the other agents that is has communicated with so far.
Formally, $\overline{C_u} = (c_u^1,c_u^2,\ldots ,c_u^{|\Agents|})$ is a vector of Boolean flags for maintaining progress of $u$ with respect to the other agents and $B_u$ is a set maintained by $u$ for tracking the agents it has communicated with during the current final state.
Initially, $B_u = \emptyset$ and $\overline{C_u} = \overline{0}$. The flag $c_u^v$ on agent $u$ records the progress of $u$ with respect to $v$. If $c^v_u$ is 1, agent $u$ has recorded that it has \emph{finished a goal} ($goal_{u,t} = \top$) after communicating with $v$ $(v \in B_u)$. Whenever the corresponding progress measures $c_u^v$ and $c_v^u$ are equal and the agents are in the same communication group, the Boolean flags are reset to $0$. The exact algorithm is presented in Figure \ref{fig:flow}b.

Let $c^v_u(t)$ denote the value of the flag $c^v_u$ at time $t$.
We use $a \prec^t b$ to denote that $c_a^b(t) =1,~c_b^a(t)=0$ and $a=^tb$ to denote $c_a^b(t) = c_b^a(t)$. Observe that for any pair of agents $a$ and $b$ either $a =^t b$ or $a \prec^t b$ or $b \prec^t a$. %\Dj{Moreover,} $\prec^t$ as defined is a \emph{partial}-order.

%\begin{remark}
%If none of the agents have completed any goal, then $a =^t b =^t c$. \end{remark}

\begin{prop}
%Let $a, b, c$ be agents. 
If $a \prec^t b$ and $b \prec^t c$, then no agent $d$ exists such that $c \prec^t d$ and $d \prec^t a$.
\end{prop}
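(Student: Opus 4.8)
The plan is to rule out the existence of such a $d$ by showing that the flag values can never realize a directed $4$-cycle $a\prec^t b\prec^t c\prec^t d\prec^t a$. First I would dispose of the degenerate choices of $d$. The relation $\prec^t$ is irreflexive and antisymmetric directly from its definition (a single flag $c_a^a(t)$ cannot be both $1$ and $0$, and $a\prec^t b$ together with $b\prec^t a$ would force $c_a^b(t)$ to be both $1$ and $0$). Hence $d=a$ and $d=c$ are impossible, $d=b$ contradicts the hypothesis $b\prec^t c$, and $a,b,c$ are already pairwise distinct. So it remains to derive a contradiction assuming $a,b,c,d$ are four distinct agents and all four relations hold at time $t$.

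The key device is, for each of the four relevant pairs, its last synchronisation time. For agents $x,y$ that have communicated by time $t$ --- which is forced whenever $c_x^y(t)=1$ or $c_y^x(t)=1$, since a flag is set only after communication --- let $\sigma_{xy}(t)=\sigma_{yx}(t)$ be the greatest $t'\le t$ with $c_x^y(t')=c_y^x(t')=0$. This is well defined because all flags are $0$ at time $0$ and every reset in the ordering mechanism sets both flags of a pair back to $0$; moreover, since a reset is the only operation that lowers a flag, no reset occurs on $(\sigma_{xy}(t),t]$, so on that window each of $c_x^y,c_y^x$ is $0$ and then possibly jumps to $1$ at a goal completion of the corresponding agent and stays there. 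I would then extract two facts for a pair $(x,y)$ at time $t$: \textbf{(a)} if $c_x^y(t)=1$, then $x$ completed a goal at some time in $(\sigma_{xy}(t),t]$ (the flag was $0$ at $\sigma_{xy}(t)$ and is only ever raised by a goal completion of $x$); and \textbf{(b)} if $c_x^y(t)=0$ while $c_y^x(t)=1$, then $x$ completed no goal in $(\sigma_{xy}(t),t]$.

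Granting (a) and (b), the proof closes by a cyclic chain of strict inequalities among the four synchronisation times. From $a\prec^t b$ and (a), $a$ completes a goal at some $g\in(\sigma_{ab}(t),t]$; from $d\prec^t a$ we have $c_a^d(t)=0$ and $c_d^a(t)=1$, so (b) applied to the pair $(a,d)$ says $a$ completes no goal in $(\sigma_{ad}(t),t]$, whence $g\le\sigma_{ad}(t)$ and therefore $\sigma_{ab}(t)<g\le\sigma_{da}(t)$. Running the identical argument at $b$, $c$, and $d$ --- each of which is the ``ahead'' endpoint of one relation in the cycle and the ``behind'' endpoint of the preceding one --- yields $\sigma_{bc}(t)<\sigma_{ab}(t)$, $\sigma_{cd}(t)<\sigma_{bc}(t)$, and $\sigma_{da}(t)<\sigma_{cd}(t)$. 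Chaining these gives $\sigma_{ab}(t)<\sigma_{da}(t)<\sigma_{cd}(t)<\sigma_{bc}(t)<\sigma_{ab}(t)$, a contradiction, which rules out the witness $d$.

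The step I expect to be the real obstacle is (b): certifying that a ``behind'' flag $c_x^y(t)=0$, in the presence of $c_y^x(t)=1$, genuinely means $x$ has made no progress since the last synchronisation. The delicate point is that a goal completion of $x$ raises $c_x^y$ only when $y\in B_x$, so $c_x^y(t)=0$ does not by itself forbid a goal of $x$. To handle it I would argue that $c_y^x(t)=1$ forces $y$ to have completed a goal with $x\in B_y$ after $\sigma_{xy}(t)$, which forces $x$ and $y$ to have shared a communication group at some time after $\sigma_{xy}(t)$ (and in the non-trivial case $\sigma_{xy}(t)>0$ they are already co-located at $\sigma_{xy}(t)$, since a reset requires co-location); then the first goal $x$ completes after that co-location has $y\in B_x$ and raises $c_x^y$ to $1$, and once raised it can return to $0$ only via a reset, contradicting the maximality of $\sigma_{xy}(t)$. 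Making this bookkeeping precise --- tracking $B_x$, the reset rule, the transient ``both flags $1$'' configurations the mechanism permits, and the edge case of a pair that has never synchronised (where one may need to replace $\sigma_{xy}(t)$ by the first post-synchronisation co-location time in the statement of (b) and in the chain above) --- is where most of the work lies; the inequality chain itself is then immediate.
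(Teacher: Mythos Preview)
Your argument is sound, but it takes a genuinely different route from the paper. The paper's proof is a three-line contradiction: it introduces a single timestamp $comp(u)$ (the time at which $u$ last completed a goal), reads $x\prec^t y$ simply as ``$x$ has finished after meeting $y$ but $y$ has not,'' infers $comp(a)<comp(b)<comp(c)$ from the hypotheses and $comp(c)<comp(d)<comp(a)$ from the putative $d$, and stops. There is no per-pair synchronisation time, no analysis of $B_x$, and no separate facts (a)/(b).

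What your approach buys is rigor with respect to the reset rule. The paper's $comp$ argument is morally right but tacitly assumes that the current flag values faithfully reflect the \emph{most recent} completions, glossing over the possibility that a pair has been reset several times; your $\sigma_{xy}$ device makes this explicit and your fact (b) is exactly the lemma one needs to justify the paper's reading of $x\prec^t y$. Conversely, the paper's proof is far more economical: once one grants that $x\prec^t y$ forces the last goal of $x$ to precede the last goal of $y$, a single chain of four strict inequalities on the $comp$ values suffices, without ever touching pairwise synchronisation times. Your chain on the $\sigma$'s is isomorphic to the paper's chain on the $comp$'s, so the combinatorial core is the same; you have simply pushed the informal step into a precise lemma (your (b)) and paid for that precision with extra bookkeeping.
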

\begin{proof}[proof by contradiction]
For any agent $u$, let $comp(u)$ denote the earliest time $t'$ such that $goal_{u,t'} = \top$.
If $a \prec^t b$, then $c^b_a = 1$ and $c_b^a = 0$, i.e., agent $a$ has reached its final state, but agent $b$ has not.  Similarly, $b \prec^t c$ implies that agent $b$ has reached its final state, but agent $c$ has not. Therefore, 
\begin{equation}
comp(a) < comp(b) < comp(c).
\end{equation}
%$comp(a) < comp(b) < comp(c)$.
%\Dj{We denote the order of completion between agents $a, b$ and $c$ as} $c \cdots b \cdots a$. %Thus, $a \prec^t b$ and $b \prec^t c$ implies $a \prec^t c$.
Now, suppose there exists an agent $d$ such that $c \prec^t d$ and $d \prec^t a$, then by the same argument, the order of last completed goals among $a,c$ and $d$ is 
\begin{equation}
    comp(c) < comp(d) < comp(a).
\end{equation}
%$comp(a) < comp(d) < comp(c)$. 
(1) contradicts (2). Therefore, there cannot exist agent $d$ such that $c \prec^t d$ and $d \prec^t a$. %\qed 
\end{proof}

\begin{corollary}
$\prec^t$ is a partial-order.
\end{corollary}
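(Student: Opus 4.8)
The plan is to verify the two defining properties of a (strict) partial order for $\prec^t$: irreflexivity (equivalently asymmetry) and transitivity. Irreflexivity is immediate from the definition of $\prec^t$, since $a \prec^t a$ would require $c_a^a(t)=1$ and $c_a^a(t)=0$ at once; asymmetry likewise follows from the trichotomy noted just before Proposition~1 (for any pair exactly one of $a=^t b$, $a\prec^t b$, $b\prec^t a$ holds). So the real content of the corollary is transitivity.

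For transitivity I would argue as follows. Assume $a \prec^t b$ and $b \prec^t c$, and reuse the completion-time bookkeeping from the proof of Proposition~1: writing $comp(u)$ for the time at which $goal_{u,t}$ becomes $\top$ relative to the current flag values, the hypotheses give $comp(a) < comp(b) < comp(c)$, hence $comp(a) < comp(c)$. Now apply the trichotomy to the pair $\{a,c\}$. If $c \prec^t a$, the same argument yields $comp(c) < comp(a)$, contradicting $comp(a) < comp(c)$ — this is precisely the cyclic configuration that Proposition~1 rules out. The only remaining possibility is $a =^t c$, i.e.\ $c_a^c(t)=c_c^a(t)$; I would eliminate this by showing that the strict inequality $comp(a) < comp(c)$, together with the semaphore update rule of the ordering mechanism (a flag $c_u^v$ is raised when $u$ completes a goal after communicating with $v$, and the pair is reset only when $c_u^v=c_v^u$), forces $c_a^c(t)=1$ and $c_c^a(t)=0$, i.e.\ $a\prec^t c$ — contradicting $a=^t c$. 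Hence $a\prec^t c$, which together with irreflexivity establishes that $\prec^t$ is a partial order.

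I expect the elimination of the case $a=^t c$ to be the main obstacle, since it is the only step that goes beyond the accounting already done for Proposition~1: it requires tying the purely pairwise, resettable flags $c_a^c,c_c^a$ to the \emph{global} ordering of completion times, i.e.\ arguing that when $comp(a)<comp(c)$ the pair $(a,c)$ cannot have been reset since $a$'s completion without $c$ also having completed (using that a reset needs both flags to agree, and that $a$ and $c$ lie in a common transitive communication relation whenever the chain $a\prec^t b\prec^t c$ exists, so that $c_a^c$ was in fact raised). If one only needs the weaker reading that the transitive closure of $\prec^t$ is a partial order, this obstacle vanishes: $u\mapsto comp(u)$ is strictly increasing along every $\prec^t$-edge, so $\prec^t$ is acyclic and its transitive closure is automatically a strict partial order.
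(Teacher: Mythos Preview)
The paper gives no separate proof of this corollary; it is asserted as an immediate consequence of Proposition~1, and the only property actually used downstream (Proposition~2) is that $\prec^t$ can be extended to a total order, i.e.\ that $\prec^t$ is acyclic. Your final paragraph already supplies exactly this argument: the map $u\mapsto comp(u)$ is strictly increasing along every $\prec^t$-edge, so no cycle can exist, and the transitive closure of $\prec^t$ is a strict partial order. That is the intended content of the corollary, and on this reading your proposal matches the paper.

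Your attempt to prove \emph{literal} transitivity of $\prec^t$, however, does not go through, and the obstacle is real rather than merely technical. The step ``$a$ and $c$ lie in a common transitive communication relation whenever the chain $a\prec^t b\prec^t c$ exists, so that $c_a^c$ was in fact raised'' is false. The flag $c_a^c$ is set only when $a$ completes a goal \emph{and} $c\in B_a$, i.e.\ $a$ has actually communicated with $c$ during the current goal. Communication groups are defined by spatial proximity at a fixed time; the relations $a\prec^t b$ and $b\prec^t c$ may have been established at different times, with $a$ and $c$ never in the same group. In that case $c_a^c=c_c^a=0$, hence $a=^t c$, and $\prec^t$ is genuinely not transitive as a binary relation. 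So the ``elimination of the case $a=^t c$'' cannot be carried out from the mechanism's update rules alone. The correct resolution is the one you give at the end: read the corollary as asserting acyclicity (equivalently, that the transitive closure is a strict partial order), which is all that Proposition~2 needs and all that Proposition~1 actually delivers.
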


\begin{prop}
There exists a total order $\prec_t$ that respects $\prec^t$.
\end{prop}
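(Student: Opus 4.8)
The plan is to invoke the finite order-extension principle (the finite case of Szpilrajn's theorem), realized concretely as a topological sort, and to read off the priority assignment from it. By Corollary 1, $\prec^t$ is a (strict) partial order on the set $\Agents$, which is finite; write $n = |\Agents|$. I would build the total order by repeatedly extracting $\prec^t$-minimal elements.

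\emph{Step 1 (minimal elements exist).} First I would observe that every nonempty $X \subseteq \Agents$ has a $\prec^t$-minimal element: restricted to $X$, $\prec^t$ is still a strict partial order on a finite set, so starting from any $x_0 \in X$ and descending ($x_{j+1} \prec^t x_j$ whenever $x_j$ is not minimal) produces pairwise distinct elements by irreflexivity and antisymmetry, hence must terminate. If one prefers to lean only on Proposition 1, one can instead note that $\prec^t$ is acyclic — a cycle $u_1 \prec^t \cdots \prec^t u_k \prec^t u_1$ would, by the $comp(\cdot)$ argument from the proof of Proposition 1, force $comp(u_1) < comp(u_1)$ — and then pass to its transitive closure, which is a strict partial order refining $\prec^t$.

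\emph{Step 2 (construction) and Step 3 (verification).} Next I would set $X_1 = \Agents$ and, for $i = 1, \dots, n$, choose a $\prec^t$-minimal element $a_i$ of $X_i$, assign $priority(a_i,t) = i$, and put $X_{i+1} = X_i \setminus \{a_i\}$. This gives each agent a distinct value in $[1,n]$, so $priority(\cdot,t)$ is a valid priority function and the induced relation $\prec_t$, with $u_1 \prec_t u_2$ iff $priority(u_1,t) < priority(u_2,t)$, is a total order. To see it respects $\prec^t$, suppose $a \prec^t b$ and let $i$ be the index with $b = a_i$; since $b$ is $\prec^t$-minimal in $X_i$ and $a \prec^t b$, we must have $a \notin X_i$, so $a = a_j$ for some $j < i$, whence $priority(a,t) = j < i = priority(b,t)$, i.e. $a \prec_t b$.

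I expect the only real subtlety — and it is a mild one — to be bookkeeping about exactly which order-theoretic property is in hand: Proposition 1 rules out one four-element pattern, Corollary 1 upgrades this to ``$\prec^t$ is a partial order,'' and the construction above only requires acyclicity of $\prec^t$ (or of its transitive closure) on a finite set, which is precisely what is available. The remainder is the textbook topological-sort extension and is routine.
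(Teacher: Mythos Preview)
Your argument is correct: Corollary~1 gives that $\prec^t$ is a strict partial order on the finite set $\Agents$, and any such order admits a linear extension via the topological-sort construction you describe. However, the paper takes a different and more explicit route: it defines $a \prec_t b$ directly by the rule ``$a \prec^t b$, or else $a =^t b$ and $a \prec_0 b$,'' using the fixed initial order $\prec_0$ as a tie-breaker on the $=^t$-classes. The difference matters in context. Your construction is a global procedure (repeatedly extract a minimal element from the whole agent set), whereas the paper's formula lets any pair of agents determine their relative order on the fly from only their two flags $c_a^b, c_b^a$ and the static $\prec_0$; this is exactly the ``locally computable'' property the paper advertises in the introduction and relies on in Algorithm~\ref{Trajectory_update}. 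Your proof establishes existence cleanly and generically, but it does not recover the specific decentralized tie-breaking rule that the rest of the paper uses; conversely, the paper's proof asserts transitivity of its explicit $\prec_t$ without argument, which your approach avoids by appealing to the standard extension principle.
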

\begin{proof}
Define $\prec_t$ as 
%For any pair \Dj{of} agents $a$ and $b$, define $\prec_t$ as
\begin{align*}
    a \prec_t b &\text{ if }
                    i)~a \prec^t b \text{ or }
                    ii)~a =^t b \text{ and } a \prec_0 b, \\
    b \prec_t a &\text{ otherwise,}                
\end{align*} where $a$ and $b$ are some agents.
$\prec_t$ as defined is a total order. %completing the proof. 
%\qed
\end{proof}

\noindent Henceforth, we use $\prec_{0}$ to generate $\prec_t$ that respects $\prec^t$. %$\prec^t$ as defined is transitive, i.e., $a \prec^t b$ and $b \prec^t c$, then $a \prec^t c$.
%This is shown in Algorithm in Figure \ref{fig:flow} (right).

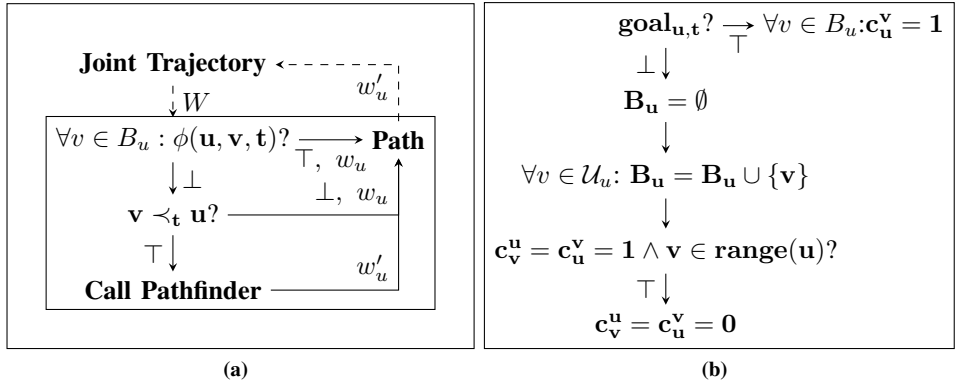
\begin{figure*}[!htb]
\centering
%\begin{subfigure}[t]{0.5\textwidth}
\tikzstyle{line} = [draw]
\subfloat[]{
\begin{tikzpicture}[%
    >=stealth,
    node distance=1cm and 3cm,
    on grid,
    auto,
    >=stealth,
]
\node (safe) {$\forall v \in B_u: \mathbf{\phi (u,v,t)}$?};
\node[below=of safe] (no) {$\mathbf{v \prec_t u}$?};
\node[right = of safe] (yes) {\textbf{Path}};
\node[below = of no] (pathfinder) {\textbf{Call Pathfinder}};
\path [line,->] (safe) -- node [anchor = west] {$\bot$} (no);
\path [line,->] (safe) -- node [anchor = north] {$\top,~w_u$} (yes);
\path [line,->] (no) -- node [anchor = east] {$\top$} (pathfinder);
\path [line,->] (no) -| node [anchor = south east] {$\bot,~w_u$} (yes);
\path [line,->] (pathfinder) -| node [anchor = south east] {$w_u'$} (yes);
\node[above = of safe] (jt) {\textbf{Joint Trajectory}};
\node[fit= (jt) (safe) (no) (yes) (pathfinder), draw, inner sep=5.3mm] (total) {};
\node[fit= (safe) (no) (yes) (pathfinder), draw, inner sep=0.1mm] (decision) {};
\node[above = of safe] (for) {};
\path [line,dashed, ->] (yes) |- node [anchor = north east]{$w_u'$} (jt);
\path [line, dashed, ->] (jt) -- node {$W$} (safe);
\end{tikzpicture}} 
%\end{subfigure}
%\begin{subfigure}[t]{0.3\textwidth}
%\tikzstyle{line} = [draw]
\subfloat[]{
\begin{tikzpicture}[%
    >=stealth,
    node distance=1cm and 2.5cm,
    on grid,
    auto,
    >=stealth
]
\node (goal) {$\mathbf{goal_{u,t}}$?};
\node [right = of goal] (increase_progress)  {$\forall v \in B_u$:$\mathbf{c_u^v = 1}$};
\node [below = of goal] (resetb) {$\mathbf{B_u = \emptyset}$};
\node [below = of resetb] (add) {$\forall v \in \mathcal{U}_u$: $\mathbf{B_u = B_u \cup \{v\} }$};
\node [below = of add] (can) {$\mathbf{c_v^u = c_u^v = 1} \land \mathbf{v \in range(u)}$?};
\node [below = of can] (rc) {$\mathbf{c_v^u = c_u^v = 0}$};
\node[fit=(goal) (increase_progress) (resetb) (add) (can) (rc), draw, inner sep=0.1mm] (decision) {};
\path [line, ->] (goal) -- node [anchor = north] {$\top$} (increase_progress);
\path [line, ->] (goal) -- node [anchor =  east] {$\bot$} (resetb);
\path [line, ->] (resetb) -- (add);
\path [line, ->] (add) -- (can);
\path [line, ->] (can) -- node [anchor = east] {$\top$} (rc);
\end{tikzpicture}}
%\end{subfigure}
\caption{(a) Algorithm for the enforcer $S(u,t)$ to decide if the pathfinder should be called at time $t$. (b) Algorithm to maintain the priorities.}
\label{fig:flow}
\end{figure*}

\subsubsection*{Decentralized Enforcement.}

So far, we have described the components of the enforcers onboard an agent. In traditional shield synthesis, $S_1$ is a function that is fully constructed and used as the shield \cite{BloemKKW15}. In contrast, here $S_1$ is a partial function which is when required.  Figure \ref{fig:flow}a presents the algorithm to determine the calls to the pathfinder and Figure \ref{fig:flow}b presents the ordering mechanism to update the priorities by modifying the corresponding flags. 

For some agent $u_i$, if $w'_{u_i}$ is the path returned by the pathfinder, then $S_1(u,t)(O,V,W)=(O,V,\Replace_i(W,(v_{u_i},w'_{u_i})))$. That is, the path for the agent $u_i$ has been replaced with $w'_{u_i}$, with the paths of the other agents unaffected and their priorities unchanged.
If the pathfinder is never called, then the path does not get modified, i.e., $S_1(u_i,t)(O,V,W) = (O,V,W)$.

The following lemma is a direct consequence of the construction of the pathfinder graph $G_{u}^t$ and a path between $v_{init}$ and $F$. 
\begin{lemma}
For all $t' \in [t,t+\ell]$ and $u \in \Agents$, if agent $u$ moves along the trajectory returned by $S_1(u,t')$, then $\Safety(t') = \top$.
\label{lemma:correct}
\end{lemma}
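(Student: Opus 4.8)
The plan is to unwind the definition of the pathfinder graph $G_u^t$ and show that following any trajectory it returns cannot create a safety violation among the agents whose priority is at least that of $u$, and then to chain this observation across all agents in the group according to the total order $\prec_t$. First, I would fix an agent $u$ and a time $t'\in[t,t+\ell]$, and recall that by construction every edge of $G_u^t$ from a node at time-layer $t_1$ to the layer $t_1+1$ is present only if conditions (i)--(v) in the definition of the general pathfinder graph hold: in particular condition (i) $\Safety_u(v)=\top$ forces the local component of the safety property to hold at the source configuration, condition (iv) $\bar\varphi_u(t)=\top$ forces all strictly-higher-priority agents to be mutually safe, and condition (v) pins every higher-priority agent to its own (already fixed) trajectory. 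Hence any path from $v_{init}$ to $F$ in $G_u^t$ yields, at every intermediate time layer, a configuration satisfying $\Safety_u$ together with $\bar\varphi_u$; by the definition $\Safety(t')=\bigwedge_i\Safety_i(t')$ and the local-safety decomposition, this means the sub-system consisting of $u$ and all agents with priority $\ge priority(u,t')$ is safe at time $t'$.

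Next I would handle the case in which no $v_{init}$-to-$F$ path exists, where Algorithm~\ref{algo:pathfinder} instead returns a length-one step along a shortest path in the occupancy graph $O_u^t$ to an unoccupied vertex and, crucially, forces every non-highest-priority agent one step along that path. Here I would argue that by the explicit description accompanying the algorithm ("a random path of length 1 that only ensures safety", and the occupancy-graph construction that never removes self-loops — Assumption~1 guarantees a self-loop at every vertex), the single step is chosen precisely so that the resulting configuration at time $t'$ satisfies $\bar\varphi_u(t')$; the highest-priority agent is never displaced, so its trajectory remains consistent, and every lower-priority agent's forced move is into the vacated cell, keeping the node labelling consistent and collision-free. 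The composition of enforcers in order $\prec_t$ then propagates this: each enforcer, when invoked, sees the trajectories already fixed by all higher-priority enforcers, so applying $S_1(\cdot,t')$ agent-by-agent down the order never violates a constraint that a previously-processed (higher-priority) agent relied on, and after the lowest-priority agent is processed every pair is jointly safe, i.e. $\Safety(t')=\top$.

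The main obstacle I anticipate is the bookkeeping in the second case: showing that the "disturb everyone but the top agent" rule genuinely restores $\bar\varphi_u$ at the single step rather than merely not making things worse, and that iterating this across the $\prec_t$-ordered composition does not undo the safety established for a higher-priority agent. This requires being careful that (a) the occupancy graph's target vertex is reachable in one step for every agent pushed along $p$ (which rests on Assumption~1 and 2-edge-connectivity of $G$), and (b) the forced moves of lower-priority agents are exactly the moves encoded in the edges of each higher agent's $G_u^t$ via condition (v), so no contradiction arises. Once those two consistency facts are in place, the lemma follows by a straightforward induction on the position of $u$ in the order $\prec_t$, with the base case being the highest-priority agent (whose trajectory is never modified and hence trivially respects $\bar\varphi$ for itself).
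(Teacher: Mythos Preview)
Your proposal is correct and follows the same route as the paper, which does not give a formal proof but simply states that the lemma ``is a direct consequence of the construction of the pathfinder graph $G_u^t$ and a path between $v_{init}$ and $F$.'' Your unpacking of conditions (i)--(v), the occupancy-graph fallback, and the induction along $\prec_t$ is exactly the rigorous version of that one-line justification; the paper offers nothing beyond it.
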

 
\noindent Next, we prove that the agent with the highest priority is able to progress without any deviation.
\begin{lemma}
If agent $u$ has the highest priority according to $\prec_t$ then it will reach its final state without any modifications. Moreover, if a $\ell$--stabilizing centralized shield can ensure safety, then the enforcers also can ensure safety.
%in at most $\ell+k$ steps. 
\label{lemma:hp_lt}
\end{lemma}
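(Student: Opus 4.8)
The plan is to prove the two assertions in sequence, using the first to bootstrap the second.

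\emph{Highest-priority agent.} First I would record that, by the flag dynamics of the ordering mechanism (Figure~\ref{fig:flow}b), once $u$ is the $\prec_t$-maximum it stays the $\prec_{t'}$-maximum for every $t'\ge t$ until $goal_{u,t'}$ becomes $\top$: a flag $c_v^u$ equal to $1$ is reset only when $c_v^u=c_u^v$, and $c_u^v$ cannot turn $1$ before $u$ itself completes a goal, which requires $u$ to reach $v_u^f$ first; ties $u=^{t'}v$ are broken by the fixed seed order $\prec_0$, so no agent overtakes $u$ en route. Next I would argue that no enforcer ever rewrites $u$'s trajectory. By Figure~\ref{fig:flow}a the enforcer $S(u,t')$ leaves $w_u$ untouched unless a local safety violation involves an agent of strictly higher priority, which is impossible when $u$ is $\prec_{t'}$-maximal; and when a lower-priority agent $u'$ runs its pathfinder it edits only $p_{u'}$ through $G_{u'}^{t'}$, or, in the fallback branch of Algorithm~\ref{algo:pathfinder}, it shuffles only agents other than the highest-priority one along a path in the occupancy graph $O_{u'}^{t'}$, whose vertex set is by construction disjoint from the cells occupied by the highest-priority agent. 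Since $u$'s trajectory is therefore never modified, $u$ follows $(v_u,w_u)$ to $v_u^f$ and reaches its final state after $|w_u|\le\ell$ steps; Assumption~1 guarantees the trajectory is well defined and the self-loops allow idling without disturbing the argument.

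\emph{Completeness relative to a centralized shield.} Suppose an $\ell$-stabilizing centralized shield $\mathcal{S}$ in the sense of~\cite{bloem2014sat} ensures safety of the joint behavior; I would show by induction on $t$ that the decentralized enforcers keep $\Safety(t)=\top$. By Lemma~\ref{lemma:correct}, whenever every pathfinder call returns a path that actually reaches its target set $F$, the resulting joint trajectory is safe on its horizon, so it suffices to rule out a pathfinder getting ``stuck'' with no safe continuation. This is exactly where $\mathcal{S}$ enters: $\ell$-stabilization means that from every reachable configuration there is a safe joint continuation in which each agent deviates from its intended trajectory for at most $\ell$ steps before resuming it. Processing the agents in the order $\prec_t$ (lowest first), the restriction of that continuation to the agents of priority $\ge$ that of the agent $u$ currently being processed witnesses a path from $v_{init}$ to $F$ inside $G_u^{t}$ --- here one uses that $G_u^{t}$ carries horizon $d(v_u,v_u^f)+k$ with $k\ge\ell$, which absorbs any $\ell$-step correction --- so $u$'s pathfinder returns a genuine goal-reaching safe path rather than the fallback, and each later call sees the higher-priority agents already pinned to this safe continuation and succeeds in the same way. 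Together with $\Safety(0)=\top$ the induction closes, so the enforcers ensure safety whenever a centralized $\ell$-stabilizing shield does.

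\emph{Main obstacle.} The delicate point is the reduction in the second part: arguing that the greedy, priority-ordered sequence of local pathfinder calls (with the occupancy-graph fallback) never excludes a safe continuation that the global shield $\mathcal{S}$ would have used --- i.e., that pinning the higher-priority agents to $\mathcal{S}$'s continuation still leaves each lower-priority agent a feasible path in its own pathfinder graph, and that the window $k\ge\ell$ genuinely contains $\mathcal{S}$'s corrections. A secondary subtlety, needed for the first part, is the precise statement of priority persistence when communication groups change; I would either restrict the claim to a fixed group or carry the $\prec_0$ tie-breaking argument through group changes, noting that a freshly joining agent $v$ with $c_v^u=c_u^v=0$ cannot overtake $u$ unless $u\prec_0 v$.
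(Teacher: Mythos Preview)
Your first part --- showing that the $\prec_t$-maximal agent is never rewritten and therefore reaches $v_u^f$ unmodified --- matches the paper's reasoning and adds useful detail on priority persistence that the paper leaves implicit.

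Your second part, however, takes a different route from the paper, and the ``main obstacle'' you flag is a real gap, not a technicality. You try to show that every pathfinder call finds a path to $F$ inside $G_u^t$ by exhibiting the centralized shield's continuation as a witness. But $G_u^t$ is built against the \emph{actual} trajectories of the higher-priority agents --- which, at the moment the lowest-priority agent is processed, are still their intended trajectories, not whatever the centralized shield would have prescribed for them. Nothing forces those to coincide, so the shield's path for $u$ need not live in $G_u^t$ at all. Your sentence ``each later call sees the higher-priority agents already pinned to this safe continuation'' inverts the processing order: later calls are for \emph{higher}-priority agents, and those have not been pinned to anything yet when the lower ones run.

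The paper avoids this entirely by arguing through the fallback branch of Algorithm~\ref{algo:pathfinder} rather than through $G_u^t$. It uses the existence of an $\ell$-stabilizing centralized shield only to conclude that \emph{some} safe position for each lower-priority agent $v$ exists; Assumption~1 (connectivity of $G$) then yields a short path to an unoccupied safe vertex in the occupancy graph $O_v^t$, and the fallback moves $v$ (possibly shuffling intermediate non-maximal agents) along it. No attempt is made to match the shield's joint continuation --- the hypothesis is used solely for the bare existential fact that a safe one-step move exists. That is a much weaker extraction from the centralized-shield assumption and is exactly what makes the argument go through without your obstacle.
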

\begin{proof}
If $u$ has the highest priority by $\prec_t$, then for all $v \in \Agents$, it is either the case that $c_u^v=0$ and $c_v^u=1$ (or) $c_u^v = c_v^u$ and $v <_{0} u$.
In either case, $v$ finds a new path if a safety violation is detected. Since a centralized shield can ensure safety, it implies that there is at least one safe position for $v$ in the occupancy graph $O^t_v$. Assumption 1 states that $G$ is 1-edge connected therefore, there is a safe vertex such that the path length is at most $\ell$. By the pathfinder algorithm, the trajectory of $v$ is modified. Similarly, all other agents modify their trajectories in the case of a safety violation. 
%Therefore, by assumption 1, agent $u$ can reach its goal in $2\ell + k$ steps. %\qed
\end{proof}

\begin{corollary}
In the worst case, the distance between $u$ and its final state maybe $|\Agents|(\ell)$.
\label{lmt}
\end{corollary}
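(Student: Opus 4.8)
The plan is to chain together the guarantee of Lemma \ref{lemma:hp_lt} with the ordering mechanism's liveness property, namely that every agent eventually attains the highest priority. First I would fix an agent $u$ and ask: how many times can $u$ be forced to deviate before it becomes the top-priority agent and is thereafter guaranteed (by Lemma \ref{lemma:hp_lt}) to run to its final state with no further modification? The priority relation $\prec_t$ is a total order on $|\Agents|$ agents, and the flag-based ordering mechanism only demotes an agent \emph{after} it has completed a goal; so along the way to becoming highest-priority, $u$ can be overtaken — and hence re-planned around — at most once per other agent, i.e.\ at most $|\Agents|-1$ times. Each single invocation of the pathfinder appends at most $k \le \ell$ (equivalently, at most $\ell$) extra steps to $u$'s trajectory before it reaches its target set $F$, since the pathfinder graph $G_u^t$ has depth $d(v_u,v_u^f)+k$ and any path to $F$ has length at most $d(v_u,v_u^f)+k$, which is an overhead of at most $k \le \ell$ beyond the shortest path.

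The key steps, in order, would be: (i) argue that $u$ reaches the highest priority in finitely many "rounds," where a round is a maximal stretch during which $u$'s priority is not the top one and a safety violation triggers a re-plan; (ii) bound the number of such rounds by $|\Agents|$ using the total-order structure of $\prec_t$ together with the fact (from Figure \ref{fig:flow}b and the semaphore argument behind Proposition \ref{prop:1} / Corollary \ref{lmt}'s neighbours) that an agent is only ever demoted below another after completing a goal, so no agent can "leapfrog" $u$ more than once without itself progressing; (iii) bound the extra displacement contributed by each round by $\ell$, using the construction of $G_u^t$ and Lemma \ref{lemma:correct}; (iv) once $u$ is highest priority, invoke Lemma \ref{lemma:hp_lt} to conclude it travels the remaining (at most) $d(v_u,v_u^f)$ steps with no modification, and fold this final leg into the count. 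Summing, the total distance travelled by $u$ before reaching its final state is at most $|\Agents|$ blocks of length $\ell$, i.e.\ $|\Agents|\,\ell$, which is the claimed bound; this also certifies that the enforcers are $(\ell,\ell')$-bounded with $\ell' = |\Agents|\,\ell$.

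The main obstacle I anticipate is step (ii): making rigorous the claim that $u$ can be "passed" by each other agent at most once. This rests on the semaphore-style invariant that the flags $c_u^v, c_v^u$ are reset to $0$ only when both agents are in the same communication group and have made equal progress, so the relation $\prec^t$ restricted to any fixed pair flips only in the direction of the agent that most recently completed a goal — hence $u$ regains priority over $v$ only by completing its own goal, at which point it would already be done. Pinning down that this yields a clean bound of $|\Agents|-1$ overtakings (and not something larger because of agents entering and leaving $u$'s communication group) is the delicate part; everything else is a direct computation from the pathfinder graph's depth and Lemma \ref{lemma:hp_lt}.
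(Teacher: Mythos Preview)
You have misread what the corollary is asserting. It is placed immediately after Lemma~\ref{lemma:hp_lt} and is used inside the proof of Lemma~\ref{lemma:main} with the sentence ``At this stage by Corollary~\ref{lmt}, agent $u$ requires $|\Agents|(\ell)$ steps to reach its goal.'' In other words, the corollary bounds the distance between $u$ and its final state \emph{at the moment $u$ already holds the highest priority}; it is not a bound on the total deviation suffered by an arbitrary agent. That total bound is the content of Lemma~\ref{lemma:main}, and there the paper obtains $|\Agents|^2\ell$, not $|\Agents|\ell$. Your write-up is therefore an attempted proof of Lemma~\ref{lemma:main} with a strictly sharper constant than the paper claims, dressed up as a proof of the corollary.

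The substantive gap is in your steps (ii)--(iii). You assert that $u$ can be re-planned at most once per other agent and that each re-plan adds at most $\ell$ extra steps. Neither holds in the paper's model. While a single higher-priority agent is working toward its goal, lower-priority agents may be forced onto the occupancy-graph branch of Algorithm~1 \emph{at every time step}, and each such push can move $u$ one step along a path of length up to $\ell$ (this is exactly the ``path length at most $\ell$'' fact extracted in the proof of Lemma~\ref{lemma:hp_lt}). So within one priority epoch $u$ can drift by up to $\ell$ \emph{per epoch}, but the epoch itself can last far longer than $\ell$ steps, because the current top-priority agent may itself have drifted before acquiring top priority. This is precisely why the paper needs the corollary as an intermediate step: it bounds the length of each epoch by $|\Agents|\ell$, and then Lemma~\ref{lemma:main} multiplies by the $|\Agents|-1$ epochs to get $|\Agents|^2\ell$. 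Your accounting collapses these two factors into one. A minor additional slip: the paper stipulates $\ell\le k<d$, so your repeated use of ``$k\le\ell$'' is backwards.
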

We now prove that the other agents are also guaranteed to make progress. The following theorem bounds the maximum deviation from the original trajectory.

\begin{lemma}[Main]
Enforcer on agent $u$ may cause a deviation from the intended trajectory for at most $|\Agents|^2(\ell)$ steps before the final state is reached. 
\label{lemma:main}
\end{lemma}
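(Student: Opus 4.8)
The plan is to establish the bound by a \emph{layered} induction that repeatedly applies Lemma~\ref{lemma:hp_lt} and Corollary~\ref{lmt}. The intuition is that $u$ must wait out at most $|\Agents|$ successive layers --- each layer ending when some agent reaches its final state --- and that in each layer $u$ is displaced by at most $|\Agents|\ell$ steps, so the total is $|\Agents|^2\ell$. Formally, I would induct on $j$, the number of agents that currently outrank $u$ under $\prec_t$.

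The first ingredient is a pair of monotonicity properties of the ordering mechanism of Figure~\ref{fig:flow}b. (i) As long as $u$ has not reached its final state, $j$ is non-increasing: the only events that change $\prec_t$ are goal completions, and when an agent $w$ completes a goal it is demoted to the lowest priority and thus moves below $u$; moreover, by the binary-semaphore discipline the flag $c_w^u$ is then set to $1$ while $c_u^w$ stays $0$ (it is reset only once $u$ itself completes), so $w\prec^t u$, hence $w\prec_t u$, remains in force until $u$ completes. Combining this with the already-proved order structure of $\prec^t$ and $\prec_t$ (the partial order and its consistent total order) yields (ii): each agent $w\neq u$ completes at most one goal before $u$ reaches its final state, so $j\le|\Agents|-1$ throughout, and after at most $|\Agents|-1$ completions by other agents $u$ is the highest-priority agent.

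The induction on $j$ then runs as follows. Base case $j=0$: $u$ is the highest-priority agent, so by Lemma~\ref{lemma:hp_lt} the pathfinder is never invoked on $u$ again and $u$ follows its current trajectory to $v_u^f$; by Corollary~\ref{lmt} the distance from $u$ to $v_u^f$ is at most $|\Agents|\ell$, so $u$ reaches $v_u^f$ within $|\Agents|\ell$ time steps and hence deviates on at most $(0+1)|\Agents|\ell$ of them. Inductive step $j>0$: the overall highest-priority agent $a$ is then distinct from $u$ and outranks it, so applying the base case to $a$ gives that $a$ reaches its final state within $|\Agents|\ell$ time steps; during that window $u$ deviates on at most $|\Agents|\ell$ steps, simply because the window contains only that many steps. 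At the close of the window $a$ has completed a goal, so by~(i) the value of $j$ for $u$ has dropped by at least one while, by Corollary~\ref{lmt}, $u$'s distance to the (unchanged) $v_u^f$ is still at most $|\Agents|\ell$; the induction hypothesis then bounds the subsequent deviation by $j\,|\Agents|\ell$. Adding the two contributions gives a total of at most $(j+1)|\Agents|\ell\le|\Agents|\cdot|\Agents|\ell=|\Agents|^2\ell$, using $j\le|\Agents|-1$ from~(ii).

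I expect the crux to be properties~(i)--(ii): ruling out that an agent can repeatedly leapfrog $u$ in priority, which is exactly what keeps the number of layers at $|\Agents|$ rather than unbounded. This is the whole point of the per-pair semaphore flags --- once $w$ finishes a goal after having communicated with $u$, the pair $(c_w^u,c_u^w)$ stays tilted in $u$'s favor until $u$ finishes, pinning $w$ below $u$ --- but the bookkeeping over communication groups is delicate: an agent may sit in $u$'s group only briefly, or leave and re-enter, and one must confirm that $u$ belongs to $w$'s set $B_w$ at the instant $w$ completes (and, dually, account for flags that were never reset because the two agents drifted out of range). A secondary but necessary check is that the pathfinder's occupancy-graph fallback forces only agents \emph{strictly below} the current mover to shift, so that a lower-priority agent never pushes $u$; together with Assumption~1 --- which already underlies both Lemma~\ref{lemma:hp_lt} and Corollary~\ref{lmt} --- this guarantees both that each layer's window is at most $|\Agents|\ell$ long and that only the $j$ agents still above $u$ contribute to its displacement, which is what lets the layers telescope.
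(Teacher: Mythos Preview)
Your proposal is correct and follows essentially the same route as the paper: the paper's argument is a terse two-sentence sketch that $u$ waits through at most $|\Agents|-1$ ``layers'' of $|\Agents|\ell$ steps each until it becomes top-priority, then finishes in a final $|\Agents|\ell$ steps by Corollary~\ref{lmt}; you have unpacked exactly this into a formal induction on $j$ and made explicit the monotonicity properties (i)--(ii) of the ordering mechanism that the paper relies on but never states. Your additional caveats about the flag bookkeeping across communication groups and the occupancy-graph fallback are genuine subtleties the paper glosses over, not departures from its strategy.
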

\begin{proof}
%Consider the case where agents $a$ and $b$ have a safety violation and agent $a$ consequently deviates from its trajectory. In the future, if $a$ and $b$ have a safety violation, then agent $b$ deviates before agent $a$.
%We consider the case when $u$ is the lowest priority agent and it has to use the occupancy graph to find a path to ensure safety until gets the highest priority. In the worst case when $u$ obtains the highest priority, the distance between $u$ and its final state may be $|\Agents|(\ell)$. Now by lemma \ref{lemma:hp_lt}, it can proceed without any modifications. T
In the worst case, any agent $u$ may be forced to use the occupancy graph during every call to the pathfinder, before agent $u$ has the highest priority according to $\prec_t$. In worst case, agent $u$ might be the lowest priority vertex to start with. Hence, agent $u$ may require $|\Agents|-1(|\Agents|\ell)$ steps to get the highest priority. At this stage by Corollary \ref{lmt}, agent $u$ requires $|\Agents|(\ell)$ steps to reach its goal. 
 %\qed
\end{proof}

The next theorem establishes that the enforcers we synthesize satisfy the properties stated in Problem 1.

\begin{theorem}[Main]
The set $\{S(u,t)|u \in \Agents \}$ of enforcers are i) correct, ii) deviate minimally, iii) bounded, and iv) complete.
\end{theorem}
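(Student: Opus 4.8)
The plan is to verify the four properties one at a time, essentially by assembling the lemmas already proved, so the theorem becomes a bookkeeping statement rather than a new argument. For \emph{correctness} I would first invoke Lemma~\ref{lemma:correct} to obtain $\Safety(t')=\top$ at every step at which some agent follows a trajectory returned by its $S_1$: the pathfinder graph $G_u^t$ has no out-edges from unsafe configurations, so every prefix of an induced path is safe, and since each $S_1(u_i,t)$ rewrites only the $i$-th component of the joint trajectory (leaving intact the higher-priority trajectories it planned around), composing the enforcers in the order $\prec_t$ preserves safety. By Proposition~2 (together with Proposition~1 and Corollary~1) this $\prec_t$ is a well-defined total order refining $\prec^t$, so the composition is unambiguous. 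For the ``final states unchanged'' clause I would note that the pathfinder's target set $F$ is by definition contained in the labelings where $u$ sits at $v_u^f$, so every successful pathfinder rewrite ends at $v_u^f$; the only branch that need not is the occupancy-graph fallback, which forces transient one-step moves on the non-top-priority agents, and I would argue these are undone because each such agent re-plans and — using the finite-termination consequence of the boundedness argument below — is eventually routed back to its own $v_u^f$.

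\emph{Minimal deviation} I would read directly off Figure~\ref{fig:flow}a: the pathfinder is called only when the safety test fails \emph{and} $v\prec_t u$, so when $\Safety_u(t)=\top$ the enforcer takes the ``Path'' branch and returns $w_u$ untouched, i.e.\ $p'_u=p_u$, which is exactly the definition of causing minimum deviation. For \emph{boundedness}, a single pathfinder call returns a path of length at most $\ell+k$ with $k<d$, because the time horizon of $V'$ in $G_u^t$ is $[t,t+d(v_u,v_u^f)+k]$; hence every enforcer is an $(\ell,\ell+d)$-enforcer. Moreover Lemma~\ref{lemma:main} bounds the total accumulated deviation of any agent by $|\Agents|^2\ell$ before it reaches $v_u^f$, so in particular every agent reaches its final state in finitely many steps, which is the fact used to close the correctness argument.

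For \emph{completeness} I would combine the second half of Lemma~\ref{lemma:hp_lt} with livelock-freedom of the ordering mechanism. While $u$ is not top priority, the hypothesis that a centralized $\ell$-stabilizing shield keeps the system safe guarantees, via Assumption~1, a safe vertex reachable within $\ell$ steps in $O_u^t$, so the pathfinder never fails to preserve safety; once $u$ becomes top priority, Lemma~\ref{lemma:hp_lt} (and Corollary~\ref{lmt}) gives it an unobstructed route of length at most $|\Agents|\ell$ to $v_u^f$, and by the priority-rotation argument of Lemma~\ref{lemma:main} the top priority is acquired within $O(|\Agents|^2\ell)$ steps. Hence correctness is maintained throughout and every agent terminates, which is exactly the completeness claim.

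The main obstacle I anticipate is the interaction between the occupancy-graph fallback and the ``final states unchanged''/termination requirements: one must rule out that the forced one-step displacements of the lower-priority agents conspire to postpone some agent's progress forever, i.e.\ show that the priority rotation strictly dominates this disturbance. This is precisely where the strict chain $comp(a)<comp(b)<comp(c)$ from Proposition~1 and the quadratic bound of Lemma~\ref{lemma:main} do the real work; the remainder is routine.
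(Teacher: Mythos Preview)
Your proposal is correct and follows essentially the same approach as the paper: each of the four properties is established by invoking the corresponding lemma (Lemma~\ref{lemma:correct} for correctness, the control-flow of Figure~\ref{fig:flow}a for minimal deviation, Lemma~\ref{lemma:main} for boundedness, and Lemma~\ref{lemma:hp_lt} for completeness). Your treatment is in fact more careful than the paper's own one-paragraph proof---you explicitly address the ``final states unchanged'' clause of correctness and the well-definedness of the composition order via Propositions~1--2, and you give the tighter single-call bound $(\ell,\ell+d)$ whereas the paper states the looser $(\ell,|\Agents|^2\ell)$---but the skeleton is identical.
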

\begin{proof}
As a consequence of Lemma \ref{lemma:main}, the maximum number of steps that any agent needs to reach its final state is bounded by $(|\Agents|^2\ell)$. Therefore, the synthesized enforcers are all $(\ell,|\Agents|^2\ell)$--enforcers and \emph{bounded}. If no safety violation is detected then the pathfinder is never called. Hence, the enforcers also \emph{deviate minimally}. Moreover, Lemma \ref{lemma:correct} establishes that the enforcers are \emph{correct}. Further by Lemma \ref{lemma:hp_lt} the enforcer can ensure safety, if a $\ell$--stabilizing centralized sheild can ensure safety. That is, it is \emph{complete}.
%in the interval $[t,t+\ell]$. 
%\qed
\end{proof}

%\subsubsection*{Complexity.}
\noindent The main complexity result of the paper, where we bound the worst-case synthesis time, is formalized in the theorem below.
% \Dj{Given look-ahead $\ell$ and maximum deviation length $k$ from a single call, the} pathfinder constructs a graph for agent $u$ of size at most $(k+\ell) |\Agents|$. If $k$ and $\ell$ are fixed, then the size of the graph constructed by the \Dj{pathfinder} is $\text{LINEAR}(|\Agents|)$. \Dj{Moreover,} the number of edges in this graph is \Dj{also} at most $(k+\ell)|\Agents|$. The time complexity of solving a search in this graph is $\mathcal{O}((k+\ell)|\Agents|)$. \Dj{In the worst case,} all the agents are in the same communication group and the lowest priority agent may have to modify its trajectory at must $|\Agents| - 1$ times. \Dj{Hence,} the enforcer on an agent can take up to $\mathcal{O}(|\Agents|^2)$ time to modify its trajectory. 
\begin{theorem}
Given fixed look-ahead $\ell$ and maximum deviation length $k$, the enforcer on an agent takes $\mathcal{O}(|\Agents|^2)$ time to modify the corresponding agent's trajectory. 
\end{theorem}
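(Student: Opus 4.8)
The plan is to bound the running time of a single invocation of the enforcer $S(u,t)$ by decomposing it into the two tasks it performs: (i) deciding whether the pathfinder must be called (the decision procedure of Figure~\ref{fig:flow}a together with the ordering update of Figure~\ref{fig:flow}b), and (ii) running the pathfinder itself, i.e.\ Algorithm~\ref{algo:pathfinder}. I will argue that each task costs $\mathcal{O}(|\Agents|^2)$ elementary operations, and that these operations are $\mathcal{O}(1)$ because the look-ahead $\ell$ and the deviation bound $k$ are fixed constants (hence $d$, which satisfies $\ell \le k < d$, is a constant too, and the pathfinder graph $G_u^t$ has a number of vertices bounded in terms of $|V|$, $\ell$, $k$ only, independent of $|\Agents|$).

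First I would analyze the ordering/decision part. The enforcer checks $\bar\varphi_u(t)$ (equivalently, $\Safety_i(t)$ for each higher-priority agent $i$) and, for the ordering mechanism, iterates over $B_u \subseteq \Agents_u(t) \subseteq \Agents$ reading and possibly resetting the flags $c_u^v, c_v^u$. Each such pass touches at most $|\Agents|$ agents and does constant work per agent (a safety check on a fixed-size neighborhood is $\mathcal{O}(1)$ since $\Safety$ is local and each $\Safety_i$ depends only on a bounded region), so this phase is $\mathcal{O}(|\Agents|)$. Comparing two agents under $\prec_t$ is $\mathcal{O}(1)$ using only their local flags, as established in the text, so evaluating ``$v \prec_t u$?'' for the relevant $v$ is again $\mathcal{O}(|\Agents|)$ overall.

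Next I would bound the pathfinder call. The pathfinder builds $G_u^t$ (and $O_u^t$), whose vertex set has size $\mathcal{O}\bigl(|V|\cdot(d(v_u,v_u^f)+k)\bigr)$; since the time horizon of a single call is capped at $\ell+k = \mathcal{O}(1)$ worth of deviation beyond the look-ahead, this is $\mathcal{O}(1)$ in $|\Agents|$, except that marking a vertex ``occupied/black'' requires knowing the positions of the at most $|\Agents|-1$ other agents in the group along their look-ahead trajectories. Reading those trajectories and stamping the corresponding $\mathcal{O}(1)$-many space-time vertices costs $\mathcal{O}(|\Agents|)$. The graph search from $v_{init}$ to $F$ (BFS/Dijkstra on a constant-size graph) is $\mathcal{O}(1)$. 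In the fallback branch, the shortest path to an unoccupied vertex of $O_u^t$ is found in $\mathcal{O}(1)$, and then the enforcer forcibly advances every agent lying on that path one step: updating those at most $|\Agents|$ trajectories is $\mathcal{O}(|\Agents|)$. So one pathfinder call is $\mathcal{O}(|\Agents|)$.

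Finally I would combine the pieces. A subtlety — and the step I expect to be the main obstacle — is that the enforcers act by composition along $\prec_t$ (Figure~\ref{fig:flow}a loops ``Joint Trajectory $\to$ safety check $\to$ Path''): after agent $u$ rewrites its trajectory, the updated joint trajectory $W$ is fed back and lower-priority agents may in turn be triggered. From the point of view of the enforcer \emph{onboard $u$}, though, $S_1(u,t)$ only ever modifies $u$'s own component and is invoked once per time step; the cascade is realized by the \emph{other} agents' enforcers, not by re-running $u$'s. So the relevant count is: how many times can $u$'s own enforcer do the $\mathcal{O}(|\Agents|)$-work of a pathfinder call within a single step? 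The answer is: the decision is made once, and the pathfinder is called at most once, so a single step of $S(u,t)$ is $\mathcal{O}(|\Agents|)$. If instead the theorem intends the total cost ``per trajectory-modification event as seen across the convergence of the group to a safe joint trajectory,'' then I would invoke Lemma~\ref{lemma:main}: the process stabilizes after each agent has been re-routed at most $\mathcal{O}(|\Agents|)$ times, so the worst-case cumulative synthesis work charged to one agent is $\mathcal{O}(|\Agents|) \cdot \mathcal{O}(|\Agents|) = \mathcal{O}(|\Agents|^2)$. Either reading yields the stated $\mathcal{O}(|\Agents|^2)$ bound; I would state explicitly which accounting is meant and then assemble the constant-factor estimates above into the final inequality.
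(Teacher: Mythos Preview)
Your two-factor structure matches the paper's, and your $\mathcal{O}(|\Agents|)$ bound per pathfinder call is right, though reached by a different route: the paper asserts that the pathfinder graph itself has at most $(k+\ell)|\Agents|$ vertices and edges (so the search is already linear in $|\Agents|$), whereas you treat the graph as constant-size in $|\Agents|$ and charge the linear cost to stamping the space--time positions of the other agents. Both land on $\mathcal{O}(|\Agents|)$ per call, so this part is a legitimate alternative accounting.

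The gap is in the multiplicity of calls. The paper's second factor of $|\Agents|$ comes from the worst case where all agents share one communication group and the lowest-priority agent must modify its trajectory up to $|\Agents|$ times \emph{within a single enforcer invocation}: this is precisely the \texttt{ForAll} loop over $\Agents_u$ in Algorithm~\ref{Trajectory_update}, which may trigger a fresh pathfinder call for each neighbor $v$ against which a safety violation is detected. Your interpretation (a) explicitly denies this possibility (``the pathfinder is called at most once''), which would give only $\mathcal{O}(|\Agents|)$ and undercounts. Your interpretation (b) reaches for Lemma~\ref{lemma:main}, but that lemma bounds the cumulative deviation until the \emph{goal} is reached by $|\Agents|^2\ell$ steps---a different time scale entirely---and does not say an agent is re-routed $\mathcal{O}(|\Agents|)$ times per modification event; read literally it suggests $\mathcal{O}(|\Agents|^2)$ re-routes over the whole run, which combined with your $\mathcal{O}(|\Agents|)$ per call would be cubic, not quadratic. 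The ingredient you are missing is simply the per-time-step iteration over the communication group.
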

\begin{proof}
The pathfinder constructs a graph for agent $u$ of size at most $(k+\ell) |\Agents|$. If $k$ and $\ell$ are fixed, then the size of the graph constructed by the pathfinder is $\text{LINEAR}(|\Agents|)$. Moreover, the number of edges in this graph is also at most $(k+\ell)|\Agents|$. The time complexity of solving a search in this graph is $\mathcal{O}((k+\ell)|\Agents|)$. In the worst case, all agents are in the same communication group and the lowest priority agent may have to modify its trajectory at most $|\Agents|$ times. %\qed
\end{proof}
\section{Extension to Trajectories of Arbitrary Length}

In problem 1, we assume that the length of any trajectory is bound by some constant $\ell$. %If this $\ell$ is large, then the resultant graph constructed by the pathfinder also becomes large.
In this section, we define the problem for trajectories of arbitrary length and propose a solution. We show how to use the ordering mechanism to ensure that the enforcers can ensure safe behavior even when the trajectories are of arbitrary length.

\begin{prob}
Given a set $\Agents = \{u_1,\dots,u_n\}$ of agents and a set $\{(v_u,w_u)~|~u \in \Agents \}$ of their trajectories, 
construct a set $\{S(u,t)~|~u \in \Agents \text{ and } t \in \Time\}$ of enforcers such that these enforcers are correct, cause minimum deviation and bounded. 
\end{prob}

Luckily, we do not have to change the entire synthesis procedure to solve this problem. We artificially restrict the input to the enforcers. $w_u(0)$ be the trajectory (possibly infinite) for agent $u$. $v_u$ be the start state of the agent. We divide $w_u$ into blocks of length $\ell$, $w_u(0) = w_u[0:\ell-1]\cdot w_u[\ell:2\ell-1]\dots$. The enforcer $S(u,0)$ uses $w_u(0)$ as its trajectory, once it reaches the final state $\hat{\delta}(v_u,w_u(0))$, it uses $\bigg( (\hat{\delta}(v_u,w_u(0)),  w_u[\ell:2\ell-1] \bigg)$ as the new trajectory for the enforcer synthesis. Algorithm \ref{Trajectory_update} describes this procedure.

%This procedure is presented in Algorithm \ref{Trajectory_update}.

In the algorithm, whenever the agent reaches its final state, its trajectory is updated with the next $\ell$ moves and its flags are also suitably reset. The correctness of this procedure is a direct consequence of the main theorem.

\begin{smaller}
\begin{algorithm}
\KwResult{Trajectory of length at most length $\ell+k$}
$i=0$\;
Initialize $w_u$ and $v_u$\;
$v_u^f = \hat{\delta}(v_u,w_u(0)[0:\ell])$\;
$w_u(0) = w_u[0:\ell]$\;
\While{True}{
Update Communication Groups \;
Increment Time \;
\If{$goal_{u,t} = \top$}{
$i++$\;
$v' = v_u^f$\;
$w_u(t) = w_u[i\ell:(i+1)\ell]$\;
$v_u^f = \hat{\delta}(v_u,w_u[i\ell:(i+1)\ell-1])$\;
$v_u = v'$\;
}
\ForAll{$v$ in $\Agents_u$}{
\If{($u$, $v$ violate safety) $\land (v =_{t} u)$}
{
    \If{$v \prec_{0} u$}{Call the pathfinder on $u$ and find a new path \;} 
}
\ElseIf{($u$, $v$ violate safety) $\land (v \prec_{t} u)$}
{
    Call the pathfinder on $u$ and find a new path \;
    $B_u = \emptyset$ \;
}
}
t++ \;
}
\caption{Enforcer on agent $u$}
\label{Trajectory_update}
\end{algorithm}
\end{smaller}

\section{Experimental Evaluation}
We evaluate the performance of the runtime decentralized enforcer synthesis framework in the context of collision-avoidance for multi-agent systems. Specifically, we use the collision-avoidance safety function defined in Example 3 and the pathfinder construction described in Example 7. 
 The implementation of the system for ensuring safety from collisions uses the general pseudocode presented in Algorithm \ref{Trajectory_update}.

\subsubsection*{Comparison with Centralized Shields.}

We compare the modified trajectories of two agents equipped with decentralized enforcers that are synthesized at runtime with two other agents whose behaviors are modified by a centralized shield synthesized at design-time using the algorithm from \cite{multiagentshield} in a 5x5 grid world. The intended trajectories of the agents in both scenarios are the same. 

%The shields modify the trajectories to ensure there are no collisions between two agents. Figure \ref{fig:traditional} shows the behavior of the agents with the centralized shields acting on them for the scenario presented in Figure  \ref{fig:eg7}b. In this scenario, the agents have no look-ahead, i.e., $\ell = 1$. We show in Table~\ref{tab:fullresults} that the resulting state space from incorporating look-ahead $\ell > 1$ is so large that the design-time synthesis problem becomes computationally intractable even for two agents in a 5x5 grid. The green and blue agents detect at $t=1$ that there will be a collision at $t=2$ if they follow their original trajectories. The trajectory of the blue agent is modified and it reaches (2,2) at $t=4$ instead of reaching (2,2) at $t=2$. The shield makes the blue agent converge to (2,2) as soon as possible.

%In contrast, 
Decentralized enforcers can incorporate look-ahead $\ell \geq 1$. In the case with $\ell=1$, the decentralized enforcers behave precisely the same as the centralized enforcer as they can only detect collisions in the next step. In the case with look-ahead $\ell=3$, i.e., with further look-ahead,  we recover the solution presented in Figure~\ref{fig:eg6}. 
In the case with $\ell=2$, the enforcers induce a different behavior. At  $t=0$, only the collision is detected, but the final state for the blue agent is (2,2) instead of (1,2) as in the previous case. The intended trajectory is updated when the agents have reached their current goals (in this case, this update happens at (2,2) for both the agents). The effect of the enforcer is shown in Figure \ref{fig:ds_test}.

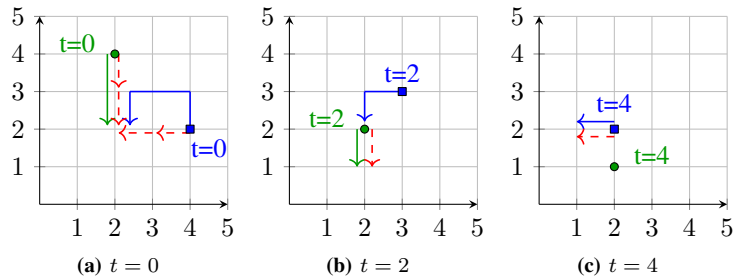
\begin{figure*}[t!]
\centering
\definecolor{green1}{rgb}{0,0.6,0}
\subfloat[$t=0$]{
\begin{tikzpicture}[line join=round,x=1cm,y=1cm]
\begin{axis}[
scale=0.5,
x=1cm,y=1cm,
axis lines=middle,
ymajorgrids=true,
xmajorgrids=true,
xmin=0,
xmax=5,
ymin=0,
ymax=5,
xtick={-3,-2,...,35},
ytick={-7,-6,...,17},]
\draw [line width=0.6pt,color=blue] (4,2) -- (4,3);
\draw [line width=0.6pt,color=blue] (4,3) -- (2.4,3);
\draw [line width=0.6pt,color=blue,->] (2.4,3) -- (2.4,2.1);

\draw [->,line width=0.6pt,color=red,dashed] (4,1.9) -- (3.1,1.9);
\draw [->,line width=0.6pt,color=red,dashed] (3,1.9) -- (2.1,1.9);

\draw [->,line width=0.6pt,color=green1] (1.8,4) -- (1.8,2.1);
\draw [->,line width=0.6pt,color=red,dashed] (2.1,4) -- (2.1,3.1);
\draw [->,line width=0.6pt,color=red,dashed] (2.1,3) -- (2.1,2.1);
\begin{scriptsize}
\draw [fill=green1] (2,4) circle (1.5pt);
\draw[color=green1] (1,4.3) node {t=0};
\draw [fill=blue] (4,2) \Square{1.5pt};
\draw[color=blue] (4.5,1.5) node {t=0};
\end{scriptsize}
\end{axis}
\end{tikzpicture}}
\subfloat[$t=2$]{
\begin{tikzpicture}[line join=round,x=1cm,y=1cm]
\begin{axis}[
scale=0.5,
x=1cm,y=1cm,
axis lines=middle,
ymajorgrids=true,
xmajorgrids=true,
xmin=0,
xmax=5,
ymin=0,
ymax=5,
xtick={-3,-2,...,35},
ytick={-7,-6,...,17},]
\draw [line width=0.6pt,color=blue] (3,3)-- (2,3);
\draw [line width=0.6pt,color=blue,->] (2,3)-- (2,2.2);

\draw [->,line width=0.6pt,color=green1] (1.8,2) -- (1.8,1);
\draw [->,line width=0.6pt,color=red,dashed] (2.2,2) -- (2.2,1);
\begin{scriptsize}

\draw [fill=green1] (2,2) circle (1.5pt);
\draw[color=green1] (1,2.3) node {t=2};

\draw [fill=blue] (3,3) \Square{1.5pt};
\draw[color=blue] (3,3.5) node {t=2};

\end{scriptsize}
\end{axis}
\end{tikzpicture}}
\subfloat[$t=4$]{
\begin{tikzpicture}[line join=round,x=1cm,y=1cm]
\begin{axis}[
scale=0.5,
x=1cm,y=1cm,
axis lines=middle,
ymajorgrids=true,
xmajorgrids=true,
xmin=0,
xmax=5,
ymin=0,
ymax=5,
xtick={-3,-2,...,35},
ytick={-7,-6,...,17},]

\draw [line width=0.6pt,color=blue,->] (2,2.2)-- (1,2.2);
\draw [line width=0.6pt,color=red,->,dashed] (2,1.8)-- (1,1.8);
\begin{scriptsize}

\draw [fill=green1] (2,1) circle (1.5pt);
\draw[color=green1] (3,1.3) node {t=4};

\draw [fill=blue] (2,2) \Square{1.5pt};
\draw[color=blue] (2,2.7) node {t=4};

\end{scriptsize}
\end{axis}
\end{tikzpicture}}
\caption{The trajectories of the blue and green agents have been modified by the respective enforcers acting on them to ensure no collisions. The first goal for the two agents is (2,2), as their look-ahead $\ell' = 2$. Once they reach (2,2), their intended trajectories are updated, which is shown in (b) and (c).}
\label{fig:ds_test}
\end{figure*}

As shown in these examples, the look-ahead parameter $\ell$ impacts the modified behavior. The agent has an increased ability to prevent future collisions with a larger value of $\ell$. This enhanced ability to prevent collisions comes at the cost of the synthesis time as the size of the graph constructed by the pathfinder increases. But, it does not affect the maximum length of the deviation.

\subsubsection*{Scalability.}

\begin{table*}[!t]
\centering
\begin{tabular}{cccccccccccc} 
\\
\toprule
 $|\mathcal{U}|$                 & States  & $~\ell~$  & $~k~~$  & \multicolumn{2}{c}{\begin{tabular}[c]{@{}c@{}}Centralized \\ game graph \end{tabular}} & \multicolumn{2}{c}{\begin{tabular}[c]{@{}c@{}}Decentralized \\ pathfinder graph \end{tabular}} & $|\Lang_I|$  & $|\Lang_O|$  & \multicolumn{2}{l}{\begin{tabular}[c]{@{}l@{}}Decentralized\\synthesis time \end{tabular}}  \\
                                 &         &           &         & $|V|$       & $|E|$                                                                    & $|V|$  & $|E|$                                                                                 &              &              & (best) & (worst)                                                                                              \\ 
\midrule
3                                & $3^2$   & 3         & 3       & $10^8$      & $10^{12}$                                                                & 18     & 18                                                                                    & $10^{2}$     & $10^{3}$     & 0.089  & 0.267                                                                                                \\
\rowcolor[rgb]{0.925,0.957,1} 3  & $3^2$   & 5         & 5       & $10^{11}$   & $10^{18}$                                                                & 30     & 30                                                                                    & $10^{3}$     & $10^{7}$     & 0.092  & 0.276                                                                                                \\
3                                & $3^2$   & 10        & 5       & $10^{20}$   & $10^{30}$                                                                & 45     & 45                                                                                    & $10^{7}$     & $10^{10}$    & 0.093  & 0.279                                                                                                \\
\rowcolor[rgb]{0.925,0.957,1} 3  & $5^2$   & 5         & 5       & $10^{13}$   & $10^{19}$                                                                & 30     & 30                                                                                    & $10^{5}$     & $10^{7}$     & 0.092  & 0.276                                                                                                \\
3                                & $10^2$  & 5         & 5       & $10^{15}$   & $10^{21}$                                                                & 30     & 30                                                                                    & $10^{6}$     & $10^{7}$     & 0.092  & 0.276                                                                                                \\
\rowcolor[rgb]{0.925,0.957,1} 3  & $50^2$  & 5         & 5       & $10^{19}$   & $10^{25}$                                                                & 30     & 30                                                                                    & $10^{7}$     & $10^{7}$     & 0.092  & 0.276                                                                                                \\
5                                & $3^2$   & 5         & 5       & $10^{19}$   & $10^{25}$                                                                & 50     & 50                                                                                    & $10^{3}$     & $10^{7}$     & 0.2    & 1                                                                                                    \\
\rowcolor[rgb]{0.925,0.957,1} 5  & $5^2$   & 5         & 5       & $10^{22}$   & $10^{28}$                                                                & 50     & 50                                                                                    & $10^{5}$     & $10^{7}$     & 0.2    & 1                                                                                                    \\
5                                & $10^2$  & 5         & 5       & $10^{25}$   & $10^{31}$                                                                & 50     & 50                                                                                    & $10^{6}$     & $10^{7}$     & 0.2    & 1                                                                                                    \\
\rowcolor[rgb]{0.925,0.957,1} 20 & $50^2$  & 3         & 3       & $10^{104}$  & $10^{107}$                                                               & 120    & 120                                                                                   & $10^{6}$     & $10^{6}$     & 0.41   & 8.2                                                                                                  \\
20                               & $50^2$  & 5         & 5       & $10^{128}$  & $10^{134}$                                                               & 200    & 200                                                                                   & $10^{7}$     & $10^{7}$     & 0.42   & 8.4                                                                                                  \\
\rowcolor[rgb]{0.925,0.957,1} 20 & $50^2$  & 10        & 5       & $10^{188}$  & $10^{197}$                                                               & 300    & 300                                                                                   & $10^{10}$    & $10^{10}$    & 1.29   & 25.8                                                                                                 \\
30                               & $50^2$  & 10        & 5       & $10^{282}$  & $10^{291}$                                                               & 450    & 450                                                                                   & $10^{10}$    & $10^{10}$    & 1.62   & 48.6                                                                                                 \\
\rowcolor[rgb]{0.925,0.957,1} 40 & $50^2$  & 10        & 5       & $-$         & $-$                                                                      & 600    & 600                                                                                   & $10^{10}$    & $10^{10}$    & 1.64   & 65.6                                                                                                 \\
50                               & $50^2$  & 10        & 5       & $-$         & $-$                                                                      & 750    & 750                                                                                   & $10^{10}$    & $10^{10}$    & 1.67   & 83.5                                                                                                 \\
\rowcolor[rgb]{0.925,0.957,1} 60 & $50^2$  & 10        & 5       & $-$         & $-$                                                                      & 900    & 900                                                                                   & $10^{10}$    & $10^{10}$    & 1.7    & 102 \\
\bottomrule \\
\end{tabular}
\caption{Comparison of state space sizes between centralized and decentralized online enforcement approaches with reported synthesis times for the decentralized approach. As the enforcers are only synthesized as needed for the relevant agents, we report both the worst and best-case total synthesis times(sec) for all relevant enforcers for every detected collision. In case of the number of vertices and edges in the centralized approach the order of magnitude are shown. $\Lang_I$ and $\Lang_O$ are the input alphabet and the output alphabet respectively.}
\label{tab:fullresults}
\end{table*}

We built a multi-agent system where the agents are equipped with the decentralized enforcer framework for collision-avoidance in Python. The distributed nature of the system is modeled using shared memory. The size of the grid world, the look-ahead length $\ell$, the communication constant $d$, and the length $k$ of the maximum deviation by one use of the pathfinder are user inputs. The original trajectories for the agents are random. We record the effect of $\ell$ and $k$ on the synthesis time for modified behavior. 
The results are obtained on an Intel Core-i7 CPU @ 2.2 GHz with 16GB of RAM. We set $d = \ell$ in all the experiments. We show the results of these experiments in Table \ref{tab:fullresults}. 

To synthesize the centralized shield, a safety game is solved. We show the size of the game graph (in the order of magnitude) for the different scenarios.  The large size explains why the design-time synthesis of centralized shields is infeasible in multi-agent settings.
For comparison, we also show the exact size of the graph constructed by the pathfinder for each scenario. Finally, we record the best and the worst-case synthesis times in the decentralized setting. Observe that the synthesis times are the same if $\ell,k$, and $|\Agents|$ are the same. 
Table \ref{tab:fullresults} shows that the synthesis time does not depend on the number of states in the environment. This observation is consistent with the earlier analysis. The worst case is when all the agents interfere with one another and they have global information. In this scenario, only the agent with the highest priority can progress without modifications. Every other agent has to wait for the agents with a lower priority to fix their trajectories. Nevertheless, Table \ref{tab:fullresults} shows that that the synthesis time is in the order of a few seconds.

\subsubsection*{ROS/Gazebo Simulation.}

In this section, we demonstrate the implementation of the enforcers in a high-fidelity simulation environment shown in Figure~\ref{fig:sims}. We \emph{do not} use any in-built obstacle avoidance library for low level collision avoidance. We tested two different scenarios in this environment. In the first scenario, we used a new safety function to prevent collision with obstacles (buildings) and ensure that the Manhattan distance between any two agents is at least 2 units. We randomly generated trajectories for the agents. A simulation with eight agents is presented in the video \footnote{ Video can be found at https://tinyurl.com/yhdddpm6}. 

In the second setting shown in~\ref{fig:sims} (top) nine agents are occupying locations corresponding to a 3x3 square. The agent in the center has to escape the confinement. Further, no trajectories for the other agents are given. In this scenario, the central agent chooses its trajectory when it gets the highest priority. The agents along the path of this agent have to make way. During this process, sometimes a set of agents may actually have to make way for one another. This scenario is presented in the video \footnote{Video can be found at https://tinyurl.com/ygwaplcp}.

\begin{figure}[!htp]

\centering
\includegraphics[width=.5\textwidth]{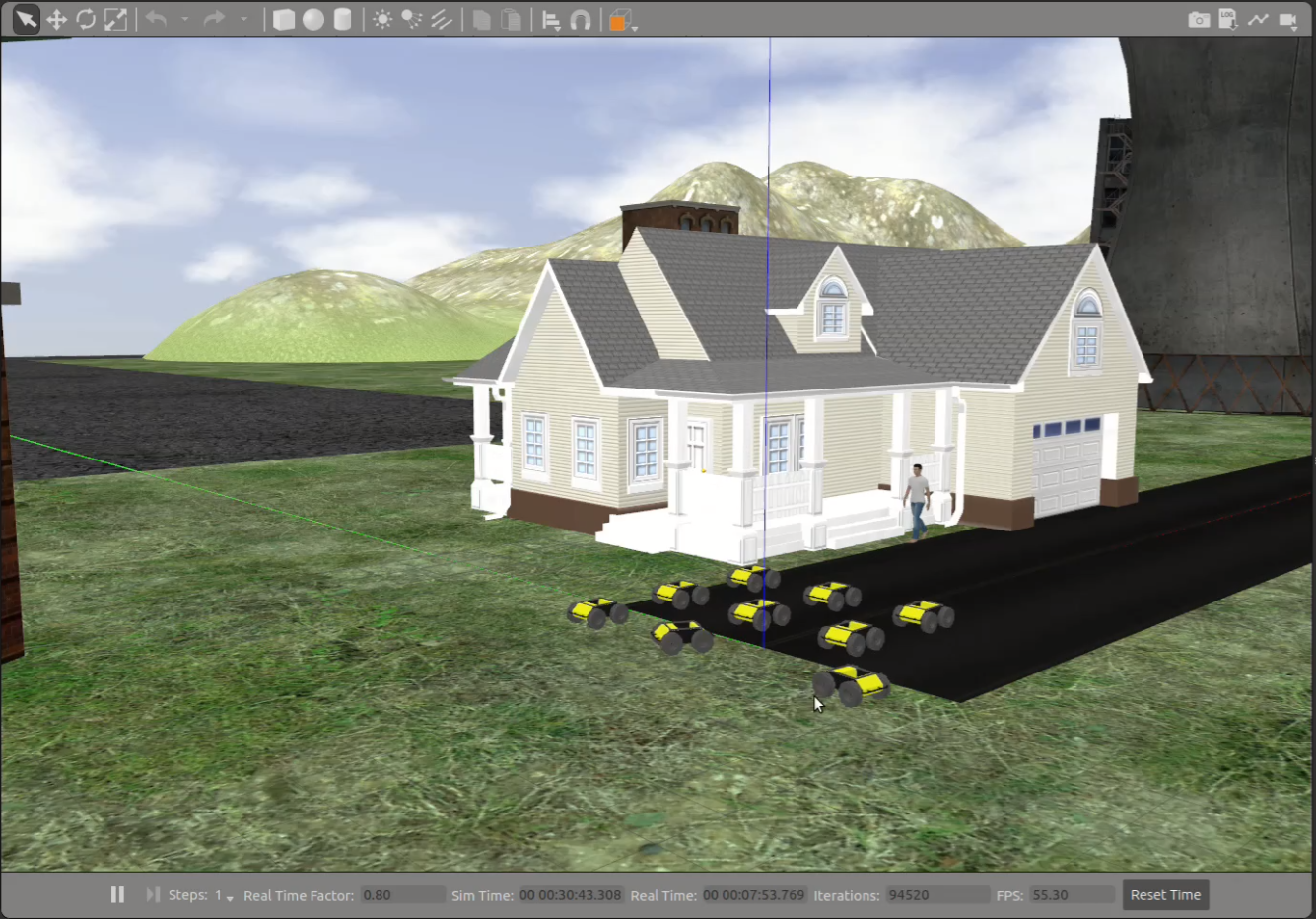}

\caption{Custom Gazebo environment used for the experiments.}
\label{fig:sims}

\end{figure}

\section{Conclusion}

We present an online synthesis approach for runtime enforcers that guarantee local safety in multi-agent systems. Moreover, this approach is decentralized, since new behavior is synthesized onboard each agent when necessary. The algorithm we present does not require global information on the states of the other agents in the system. It only requires the information about the agents in the same communication group. With minor assumptions, we prove the correctness of this approach in enforcing safety and also prove that all the agents progress per their original plan by proving a bound on the maximum deviation. Additionally, we also provide a condition which guarantees completeness. More specifically, we show that if a $\ell$-stabilizing centralized shield can guarantee correctness, then $(\ell,|\Agents|^2 \ell)$--enforcers can also guarantee correctness.  We further prove that this synthesis scales with the number of agents. In the future, we plan to consider the enforcement of general safety properties that do not require the \emph{local} restriction. Additionally, we will attempt to solve the problem in more realistic communication architectures. 

\nocite{kok2003multi}
\nocite{wongpiromsarn2011formal}
\nocite{sujit2005multi}
\nocite{ong2007multi}
\nocite{falconeruntime1}
%\nocite{KonighoferABHKT17}
\nocite{rasmussen2018brief}
\nocite{guestrin2002multiagent}
\nocite{Pnueli77}
\nocite{LecturesOnRuntimeVerification}
\nocite{BauerLS11}
\nocite{BauerF16}

\bibliographystyle{IEEEtran}
\bibliography{ref}

\end{document}